\newtheorem{thm}{\protect\theoremname}
\theoremstyle{plain}
\newtheorem{lem}[thm]{\protect\lemmaname}
\theoremstyle{plain}
\theoremstyle{plain}
\newtheorem*{lem*}{\protect\lemmaname}
\theoremstyle{plain}
\theoremstyle{plain}
  \providecommand{\corollaryname}{Corollary}
  \providecommand{\lemmaname}{Lemma}
  \providecommand{\propositionname}{Proposition}
  \providecommand{\remarkname}{Remark}
\providecommand{\theoremname}{Theorem}
\newcommand{\Or}{\mathcal{O}}
\newcommand{\RR}{\mathbb{R}}
\newcommand{\wt}{\widetilde}
\newcommand{\Tr}{\mathrm{Tr}}
\newcommand{\tr}{\mathrm{tr}}
\newcommand{\dd}{\mathrm{d}}
\newcommand{\ZZ}{\mathbb{Z}}
\renewcommand{\ket}[1]{\ensuremath{\left|#1\right\rangle}}
\renewcommand{\bra}[1]{\ensuremath{\left\langle#1\right|}}
\begin{document}
\title{Learning conservation laws in unknown quantum dynamics}
\author{Yongtao Zhan}
\email{yzhan@caltech.edu}
\affiliation{Institute of Quantum Information and Matter, California Institute of Technology}
\affiliation{Division of Chemistry and Chemical Engineering, California Institute of Technology}

\author{Andreas Elben}
\email{aelben@caltech.edu}
\affiliation{Institute of Quantum Information and Matter, California Institute of Technology}
\affiliation{Walter Burke Institute for Theoretical Physics, California Institute of Technology}

 \author{Hsin-Yuan Huang}
 \email{hsinyuan@caltech.edu}
 \affiliation{Institute of Quantum Information and Matter, California Institute of Technology}
 \affiliation{Google Quantum AI}

 \author{Yu Tong}
\email{yutong@caltech.edu}
\affiliation{Institute of Quantum Information and Matter, California Institute of Technology}

\begin{abstract}
We present a learning algorithm for discovering conservation laws  given as sums of geometrically local observables in quantum dynamics. This includes conserved quantities that arise from local and global symmetries in closed and open quantum many-body systems.  
The algorithm combines the classical shadow formalism for estimating expectation values of observable and data analysis techniques based on singular value decompositions and robust polynomial interpolation to discover all such conservation laws in unknown quantum dynamics with rigorous performance guarantees. 
Our method can be directly realized in quantum experiments, which we illustrate with numerical simulations, using closed and open quantum system dynamics in a $\mathbb{Z}_2$-gauge theory and in many-body localized spin-chains.
\end{abstract}

\maketitle

\section{Introduction}

Machine learning (ML) is playing an increasingly important role in physical sciences \cite{carleo2019machine}. 
The ability of ML to recognize patterns in data greatly facilitates the data-driven approach to scientific research, where scientific discoveries are achieved by analyzing experimental data.
Recently, many works have been done to discover physical laws with ML models~\cite{doi:10.1126/sciadv.aay2631,10.3389/frai.2020.00025,PhysRevE.106.045307,lu2022discovering,doi:10.1126/science.1165620,doi:10.1126/science.1165893,PhysRevE.100.033311,doi:10.1073/pnas.1906995116,PhysRevLett.124.010508,PhysRevE.103.033303,PhysRevResearch.2.033499,NEURIPS2021_886ad506,kaiser2018discovering}. Following this line, several works attempt to learn conservation laws in classical mechanical systems~\cite{PhysRevLett.126.180604,PhysRevResearch.3.L042035}. The ML models in these works can successfully discover conserved quantities in simple classical systems, such as energy conservation, angular momentum conservation, and momentum conservation in two-body gravitational systems.

The conservation law, or the integral of motion, is also an important concept in quantum mechanics. There are usually many global conservation laws in quantum dynamics, such as the eigenstate projection operators, when the dynamics are governed by a Hamiltonian. However, these do not imply any special dynamical properties. In the quantum setting, 
the physically relevant conservation laws are the ones with locality structure \cite{Calabrese_2016}. Such local integrals of motions \cite{PhysRevLett.111.127201,PhysRevB.90.174202,ROS2015420,imbrie2017local} underlie, for instance, the absence of thermalization and transport in certain quantum systems -  in contrast to ergodic systems, which typically conserve only a few globally supported quantities such as the total energy or number of particles. Thus, local integrals of motions are central for our understanding of phenomena such as many-body localization (MBL) \cite{PhysRev.109.1492,PhysRevB.21.2366,PhysRevLett.95.206603,BASKO20061126,https://doi.org/10.1002/andp.201700169,doi:10.1146/annurev-conmatphys-031214-014726,RevModPhys.91.021001} and Hilbert space fragmentation \cite{moudgalya2022quantum}.

In this work, we consider a broad class of conservation laws with conserved quantities given by sums of geometrically local observables.
These conservation laws can be geometrically localized or have support across the entire system.
We propose an algorithm for discovering all such conservation laws in arbitrary quantum dynamical systems. 
We consider different types of quantum dynamics in which a quantum state $\rho(t)$ evolves with time, and its evolution is described by the von Neumann equation under a Hamiltonian $H$ or a Lindblad master equation under a Lindbladian $\mathcal{L}$.
Our algorithm is general enough to cover Hamiltonians that change over time, for instance, periodically, as in Floquet systems, and find observables that are conserved between periods.
The conservation laws we consider could be state-dependent, i.e., the observables are conserved for a certain subset of initial states. Such observables can be more difficult to find than those that are conserved for all states because they will be overlooked if only information from the Hamiltonian or the Lindbladian is used.

Our algorithm combines classical shadow formalism \cite{huang2020predicting} and several data analysis techniques.
The algorithm uses classical shadows to estimate the expectation values of many Pauli observables at multiple times during the quantum dynamics based on a limited number of randomized measurements \cite{huang2020predicting, elben2022randomized}.
After estimating the expectation values, the algorithm performs singular value decomposition (SVD) on a data matrix and gets the low dimensional manifold corresponding to the conservation laws.
We rigorously prove that the algorithm can efficiently learn all conserved quantities which are sums of geometrically local observables in quantum dynamics. Although some previous papers are using numerical techniques to find such conservation laws based on known Hamiltonian~\cite{PhysRevB.91.085425,Mierzejewski2015identify,PhysRevB.94.144208,PhysRevLett.126.180602,Bentsen2019integrable}, our method can be directly applied to experiments to find conservation laws in arbitrary unknown quantum dynamics.
Our method also comes with theoretical guarantees that the sample and computational complexities are both polynomial in the system size and precision.
Note that, for Hamiltonian dynamics, one could also obtain conservation laws by first learning the Hamiltonian \cite{wang2017experimental,evans2019scalable,gu2022practical,granade2012,hangleiter2021,wiebe2014a,wiebe2014b,yu2022,ZubidaYitzhakiEtAl2021optimal,GranadeFerrieWiebeCory2012robust,li2020hamiltonian,FrancaEtAl2022efficient,tang2021,pastori2022characterization,caro2022learning,huang2022learning,zhao2022supervised}. However, Hamiltonian learning protocols only work efficiently when there are some known sparsity or locality constraints on the Hamiltonian.
For quantum dynamics without such sparsity structure, prior works involve exponential sample complexity or classical post-processing cost. In comparison, our methods are not subject to these constraints.

We perform numerical experiments illustrating the learning of conserved quantities in closed and open system dynamics in a $\mathbb{Z}_2$ gauge theory, arising from local and global symmetries. In addition, we demonstrate the learning of local, approximate conservation laws in a one-dimensional XXZ-chain with local disorder. Here, a sharp increase in the number of local conserved quantities takes place at a certain disorder strength, which we successfully observe from the result of our algorithm.

Discovering conservation laws in quantum systems is a fundamental problem and has also been studied in prior works. To our knowledge, this is the first work to provide general rigorous guarantees  for learning and testing conservation laws in quantum experiments with unknown dynamics, by employing the randomized measurement toolbox \cite{elben2022randomized} and the classical shadow formalism \cite{huang2020predicting}.
While
Ref.~\cite{Bentsen2019integrable} proposes a similar algorithm to construct local conserved quantities in quantum dynamics, they use it in classical simulations of known time-independent Hamiltonian dynamics to study integrability.
Ref.~\cite{shtanko2023uncovering} conducted physical experiments to obtain conserved quantities that are supported in local regions, which excludes conserved quantities supported on the entire system, such as total magnetization. In contrast, our protocol can uncover conservation laws supported locally and globally.
Furthermore, \cite{shtanko2023uncovering} considers quantities that are conserved at discrete points in time, whereas we can also deal with the continuous-time scenario through robust polynomial interpolation.

\section{Algorithm Description}
\label{section2}

The goal of our algorithm is to find all conserved quantities that are linear combinations of Pauli operators supported on $k = \mathcal{O}(1)$ adjacent qubits, in an unknown quantum dynamical system with experimental feasible measurements. The Hamiltonian or Lindbladian that governs the quantum dynamics is completely unknown and need not to be local, but we can control it to evolve for a time of our choice and perform randomized single-qubit measurements.
Throughout this work, we will use $\rho(t)$ to denote the time-evolved state and $O(t)$ to denote the time-evolved observable in the Heisenberg picture.

\vspace{1em}
\noindent \textbf{Classical Shadows.} 
The classical shadow formalism \cite{huang2020predicting} was proposed to efficiently predict local observables with experimentally feasible randomized measurements~\cite{elben2022randomized}. To be specific, it was shown that one can predict $M$ arbitrary linear target function $\Tr(O_1 \rho),\cdots,\Tr(O_M \rho)$ up to additive error $\epsilon$ with only $\mathcal{O} (B \log (M) / \epsilon^2)$ measurements, where $B$ is the upper bound of the shadow norm defined in~\cite{huang2020predicting}. This result implies that a limited number of measurements are enough to predict the expectation values of a large number of observables. Making use of this property, we can predict all $k$-local Pauli observables with a limited number of random measurements.

The classical shadow formalism is summarized as follows: We approximate an $N$-qubit quantum state by performing randomized single-qubit Pauli measurements on $N_s$ copies of $\rho$. That is, we project each qubit to one of three Pauli basis $X,Y,Z$ and get a product state composed by the six basis states $\{|0\rangle,|1\rangle,|+\rangle,|-\rangle,|\mathrm{i}+\rangle,|\mathrm{i}-\rangle\}$. Performing one randomized measurement gives us such a product state, which can be stored in classical memory with an $N$-element array. After performing such measurements on $N_s$ copies of states, we get $N N_s$ single-qubit measurement results, which we can make use of to construct an approximation of the unknown state $\rho$:
\begin{equation}
\hat \rho=\frac{1}{N_s} \sum_{n_s=1}^{N_s} \hat \rho_1^{(n_s)} \otimes \cdots \otimes \hat \rho_N^{(n_s)}
\label{eq:shadow}
\end{equation}
where $\quad \hat \rho_i^{(n_s)}=3\left|s_i^{(n_s)}\right\rangle\left\langle s_i^{(n_s)}\right|-\mathbb{I}$ and $s_i^{(n_s)}$ is the outcome of qubit $i$ in the $n_s$-th randomized measurement. Eq.~\eqref{eq:shadow} allows in principle to fully recover the density matrix $\rho$, in the sense $\mathbb{E}[\hat \rho] = \rho$ where we take the expectation value of many random unitaries and projective measurements \cite{elben2022randomized}. However,  this requires an exponentially large number of copies of states $N_s=\mathcal{O}(\exp(N))$.  In contrast,  $N_s=\mathcal{O}(3^r \log (N) / \epsilon^2)$ is enough to provide an $\epsilon$-accurate approximation of all reduced $r$-body reduced density matrix, allowing to estimate expectation values of $r$-local observables. We emphasize that this estimation can be made robust against errors in the application of the random unitaries and read-out errors \cite{Chen2020,Koh2020,vitale2023estimation}.

\vspace{1em}
\noindent \textbf{Learning conservation laws.} Our algorithm makes use of classical shadow formalism to evaluate the expectation values of all geometrically $k$-local Pauli observables with randomized measurements (the measurement results can be used to estimate all $k$-local Pauli observables, but we only focus on the geometrically local ones), and then post-process the measurement data to identify the conserved quantities. This is a non-trivial task because there are uncountably many linear combinations of these Pauli observables that can possibly be conserved. Here we propose a method to efficiently narrow down the range of conserved quantities to look for.

We consider a quantum system on a $D$-dimensional lattice, with each site containing a qubit. We denote all geometrically $k$-local Pauli operators by $P_i$, $i=1,2,\cdots,N_P$ with
$N_P \leq \Or(N)$. We then look for conserved quantities of the form
\begin{equation}
\label{eq:conserved_quantity}
    O=\sum_{i}c_i P_i.
\end{equation}
In other words, $\{P_i\}$ form a basis of the subspace in which we search for conserved quantities. 

The expectation values of $P_i$ at time $t_j$, $j=1,2,\cdots,N_T$, form a data matrix of size $N_P\times N_T$ ($N_T=\mathcal{O}(N_P)$), which we denote by $X$. Its elements are
\begin{equation}
\label{eq:defn_X}
    X_{ij} = \braket{P_i(t_j)}.
\end{equation}

Our algorithm is built upon the following observation 
(see also Refs.~\cite{Bentsen2019integrable,li2020hamiltonian}): 
every conserved quantity of the form \eqref{eq:conserved_quantity} lies in the null space of a matrix $W^\top$, where its transpose matrix $W$ is defined through
\begin{equation}
\label{eq:defn_W}
    W_{ij} = \braket{P_i(t_j)} - \frac{1}{N_T}\sum_{j'} \braket{P_i(t_{j'})}.
\end{equation}
This is because, if operator $O$ defined in \eqref{eq:conserved_quantity} is conserved, then $\sum_{i}c_i \braket{P_i(t_j)}$ are equal for all $j$, and they are thus all equal to the average. Consequently
\begin{equation}
    \sum_i c_i \braket{P_i(t_j)} = \frac{1}{N_T}\sum_{ij'} c_i \braket{P_i(t_{j'})}.
\end{equation}
By \eqref{eq:defn_W} we then have $W^\top \vec{c}$=0, where $\vec{c}=(c_1,c_2,\cdots,c_{N_P})$ is the vector formed by the coefficients in $O$.

From the above analysis, we can see that all  conservation laws which are linear combinations of geometrically $k$-local terms must correspond to a vector in the null space of $W^\top$, and consequently, we can find all of them by examining this null space. 
At the same time, the dimension of this null space yields the number of independent conservation laws.
Each singular value $\sigma$ of $W$ describes how much its corresponding operator expectation value changes over time \cite{Bentsen2019integrable}. More precisely, let $u=(u_1,u_2,\cdots,u_{N_P})$ be the left singular vector corresponding to $\sigma$, and $O=\sum_i u_i P_i$, then
\begin{equation}
\label{eq:physical_meaning_of_sv}
    \sigma^2 = \sum_{j}\left|\braket{O(t_j)}-\frac{1}{N_T}\right.\sum_{j'} \left.\vphantom{\frac{1}{N_T}} \braket{O(t_{j'})}\right|^2,
\end{equation}
where $\overline{\braket{O(t_{j'})}}$ denotes the average of $\braket{O(t_{j'})}$ over the time index $j'$.

Because of the inevitable statistical noise, the matrix $W^{\top}$ we get from data will most likely not have a non-trivial null space. Therefore instead of looking at the null space, we will look at the subspace spanned by the left singular vectors of $W$ corresponding to singular values below a truncation threshold $\epsilon$, which serves as a precision parameter. These singular vectors are readily obtainable by performing SVD on our approximation of $W$ based on finitely many samples (see also Ref.~\cite{li2020hamiltonian} for a similar procedure in the context of Hamiltonian learning). The number of such singular values provides an upper bound of the number of independent conserved quantities, which we will prove later. The computational cost of performing SVD on the data matrix is $\mathcal{O}(N^3)$.

So far we have been mainly concerned with conserved quantities that are specific to a single initial state. We may also learn conserved quantities for a distribution $\mathcal{D}$ of initial states in a similar way. In this scenario, we not only sample times $t_j$, but also the initial states $\rho_k$ from $\mathcal{D}$ independently, for $k=1,2,\cdots,N_I$. The data matrix $X$ is constructed to have $N_P$ rows and $N_T N_I$ columns, consisting of entries $X_{i,jk}=\tr[P_i(t_j)\rho_k]$, where $j$ and $k$ together index the columns. The matrix $W$ is similarly modified to be $W_{i,jk} = X_{i,jk} - N_T^{-1} \sum_{j'}X_{i,j'k}$. 

\vspace{1em}
\noindent \textbf{Testing conservation laws.} The above procedure gives us candidates for conservation laws. However, it is not guaranteed that the quantities we get are indeed conserved, and therefore we need to test the candidates. Testing a finite group symmetry has been considered in Ref.~\cite{LabordeWilde2022quantum}, but their algorithm requires implementing the group action on a quantum computer, whereas we want to keep our procedure to only single-qubit operations. There are two problems that we need to overcome: the first is that in the above we only look at a discrete set of times $t_j$, and cannot rule out the possibility that some quantity be conserved at these discrete times but not conserved at other times. The second is that we cannot hope to tell if a quantity is exactly conserved because of the presence of statistical noise. Consequently, we formalize the problem into a hypothesis testing problem. We first define how far a quantity deviates from its average up to time $T$ by
\begin{equation}
    d(O,\rho) = \max_{t\in[0,T]}\Big| \tr[\rho O(t)] - \frac{1}{T}\int_0^T \tr[\rho O(s)]\dd s \Big|.
\end{equation}
With $d(O,\rho)$ we introduce the two hypotheses that we want to distinguish
\begin{equation}
\label{eq:hypotheses}
    \mathbb{E}_{\rho\sim\mathcal{D}} [ d(O,\rho) ]=0, \text{ or } \mathbb{E}_{\rho\sim\mathcal{D}} [ d(O,\rho) ]\geq \epsilon.
\end{equation}
for every candidate $O$ that comes from the learning procedure. The classical shadow technique enables us to process all $O$'s in parallel.

For a fixed $\rho\sim\mathcal{D}$, we compute the maximal deviation of observable $O$ from its time average using robust polynomial interpolation \cite{KaneEtAl2017robust}. We first randomly sample the discrete times $t_j$, and then perform robust polynomial interpolation to obtain values for $\braket{O(t)}$ at all times $t\in[0,T]$. Then we can directly compute the maximal deviation from the time average. This enables us to compute $d(O,\rho)$ with high confidence level.  Note that the above discussion is for continuous $t$. If we want to test conservation laws for discrete $t$, such as for a Floquet system, then the problem comes strictly easier, as interpolation will not be needed. The ensemble average $\mathbb{E}_{\rho\sim\mathcal{D}}d(O,\rho)$ can be computed from finitely many samples of $\rho$, thus enabling us to solve the hypothesis testing problem in Eq.~\eqref{eq:hypotheses}.

\section{Rigorous Guarantees}
\label{section3}

As noted previously, the estimates for $\tr[P_i(t_j)\rho_k]$ necessarily involves statistical noise. We will then analyze how the noise impacts the result we get. First we will analyze that in the learning algorithm based on finding the null space, the algorithm still yields an upper bound of the number of conserved quantities even when statistical noise is present.

We denote the number of independent conserved quantities by $N_c$, and the dimension of the null space of $W^{\top}$ by $D_{\mathrm{null}}$. It is guaranteed that $N_c\leq D_{\mathrm{null}}$ because a quantity that is conserved in all times must also be conserved at discrete times $t_j$ and for the sampled states $\rho_k$. 
$W$ is computed from the data matrix $X$, but in practice, we do not directly have access to $X$, but can only obtain its noisy estimate $\hat{X}$, which leads to a noisy estimate for $W$ that we denote by $\hat{W}$. $\hat{W}$ is almost surely full-rank due to the effect of the noise. 
We denote $E=\hat{X}-X$, and this is the matrix containing all the entry-wise errors. The matrix $W$ is perturbed similarly, and the errors can be collected into a matrix whose spectral norm is at most $\|E\|$. 
Consequently, we cannot directly estimate $D_{\mathrm{null}}$. As discussed before, instead we look at the number of singular values of $\hat{W}$ that are below a threshold $\epsilon$, which we denote by $\hat{D}_{\mathrm{null}}$. For $\hat{D}_{\mathrm{null}}$ we have the following theorem (where for simplicity, we let $N_T\times N_I$ and $N_P$ be of order $\Or(N)$, more general $N_T$, $N_I$ and $N_P$ are considered in  Theorems 3 and 4 of Sec.~I of the supplemental material (SM) \cite{SM}:

\begin{thm}
\label{thm:counting_conservation_laws}
With $\wt{\Or}(N^3 \epsilon^{-2}\log(\delta^{-1}))$ samples~\footnote{We use the asymptotic notation $\tilde{\Or}(f(x))$ to denote $\Or(f(x)\mathrm{polylog}(f(x)))$.}, we can compute an integer $\hat{D}_{\mathrm{null}}^{\mathrm{median}}$ satisfying $
    N_c\leq \hat{D}_{\mathrm{null}}^{\mathrm{median}}
$, where $N_c$ is the number of conserved quantities, 
with probability at least $1-\delta$. Here $\hat{D}_{\mathrm{null}}^{\mathrm{median}}$ is the median taken over $\Or(\log(\delta^{-1}))$ independent samples of $\hat{D}_{\mathrm{null}}$ and $\hat{D}_{\mathrm{null}}$  denotes the number of singular values of $\hat{W}$ below $\epsilon$. In particular, when the quantum system has constant correlation length, the sample complexity can be reduced to  $\wt{\Or}(N^2 \epsilon^{-2}\log(\delta^{-1}))$.
\end{thm}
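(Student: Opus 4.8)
The plan is to split the argument into (i) a purely linear‑algebraic statement: if the spectral norm of the sampling noise in $\hat W$ is below the threshold $\epsilon$, then $\hat D_{\mathrm{null}}\ge N_c$; and (ii) a concentration statement guaranteeing that the stated number of measurements makes (i) hold with good probability; the $\log(\delta^{-1})$ factor is then obtained by a median amplification. For (i), the text already shows that every conserved quantity of the form \eqref{eq:conserved_quantity} lies in the null space of $W^\top$, so $N_c\le D_{\mathrm{null}}$. I would then write $W=X\Pi$, where $\Pi=I_{N_T}-N_T^{-1}\mathbf{1}\mathbf{1}^\top$ is the orthogonal projection subtracting the time‑average (block diagonal, one such block per initial state, in the multi‑state version), so that $\hat W-W=E\Pi$ with $E=\hat X-X$ and hence $\|\hat W-W\|\le\|\Pi\|\,\|E\|=\|E\|$. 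Since $WW^\top$ is $N_P\times N_P$ with exactly $D_{\mathrm{null}}$ zero eigenvalues, Weyl's inequality for singular values gives $|\sigma_i(\hat W)-\sigma_i(W)|\le\|E\|$ for every $i$; hence on the event $\{\|E\|<\epsilon\}$ at least $D_{\mathrm{null}}$ singular values of $\hat W$ lie below $\epsilon$, i.e. $\hat D_{\mathrm{null}}\ge D_{\mathrm{null}}\ge N_c$. So it suffices to arrange $\|E\|<\epsilon$ with, say, probability at least $2/3$ in a single run.

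\textbf{Concentration, general bound.} The classical‑shadow estimator is unbiased, so $\mathbb{E}[E]=0$, and each entry of $E$ is an average of $N_s$ i.i.d.\ single‑shot estimators of a Pauli supported on at most $k$ qubits and hence has variance at most $3^k/N_s$ (the squared shadow norm of a $k$‑local Pauli). I would use the crude bound $\|E\|\le\|E\|_F$ together with $\mathbb{E}\|E\|_F^2=\sum_{i,c}\mathrm{Var}(\hat X_{i,c})\le N_P\,(N_T N_I)\,3^k/N_s=\Or(N^2/N_s)$, where $c$ runs over the $N_T N_I$ time--state pairs; choosing $N_s=\Or(N^2\epsilon^{-2})$ snapshots per time--state pair makes this at most $\epsilon^2/3$, and Markov's inequality gives $\|E\|<\epsilon$ with probability at least $2/3$, using $(N_T N_I)N_s=\wt{\Or}(N^3\epsilon^{-2})$ measurements in total (treating $k=\Or(1)$).

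\textbf{Concentration, the factor‑$N$ improvement.} When a sharper operator‑norm bound is available --- in particular when the correlation length is constant --- I would replace the Frobenius estimate by a matrix Bernstein bound, writing $E=\sum_{c,n}N_s^{-1}Z_{c,n}e_c^\top$ as a sum of independent rank‑one terms, where $Z_{c,n}\in\RR^{N_P}$ collects the centered single‑shot estimates of all geometrically $k$‑local Paulis in the $n$‑th snapshot at index $c$. The structural point is that single‑qubit Pauli measurements make the estimates of two Paulis with disjoint supports independent, so each covariance matrix $\mathbb{E}[Z_{c,n}Z_{c,n}^\top]$ is supported only on pairs of overlapping $k$‑local Paulis --- an $\Or(1)$‑sparse matrix with entries bounded by $3^k$, hence of operator norm $\Or(1)$. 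This makes both matrix‑variance parameters $\Or(N/N_s)$ and the uniform term bound $\Or(\sqrt N/N_s)$, so matrix Bernstein yields $\mathbb{E}\|E\|=\wt{\Or}(\sqrt{N/N_s})$, and $N_s=\wt{\Or}(N\epsilon^{-2})$ snapshots per index, hence $\wt{\Or}(N^2\epsilon^{-2})$ measurements, suffice.

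\textbf{Amplification and the main obstacle.} Finally I would repeat the estimate‑and‑count procedure $m=\Or(\log\delta^{-1})$ times with independent samples and report $\hat D_{\mathrm{null}}^{\mathrm{median}}$, the median of the $m$ counts. Since each run gives $\hat D_{\mathrm{null}}\ge N_c$ with probability at least $2/3$, the median falls below $N_c$ only if at least $m/2$ runs fail, which by a Chernoff bound has probability at most $\delta$; this multiplies the sample complexity by $\Or(\log\delta^{-1})$ and gives the two claimed bounds. I expect the main obstacle to be the second concentration step: extracting the factor‑$N$ improvement over the crude Frobenius estimate requires a genuinely sharp operator‑norm bound on the noise matrix, which means invoking matrix concentration and carefully tracking both matrix‑variance parameters and the uniform bound on the rank‑one summands while exploiting the locality of the shadow estimators. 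By contrast, the one‑sided form of the claim, $N_c\le\hat D_{\mathrm{null}}^{\mathrm{median}}$, is exactly what makes the median suffice and spares us any lower bound on the nonzero singular values of $W$.
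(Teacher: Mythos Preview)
Your overall architecture matches the paper's: singular-value perturbation (the paper invokes Mirsky's inequality, equivalent to your Weyl step) reduces the task to ensuring $\|E\|<\epsilon$; one then bounds $\mathbb{E}\|E\|$ and amplifies by taking the median over $\Or(\log\delta^{-1})$ independent runs. For the general $\wt\Or(N^3\epsilon^{-2})$ bound your Frobenius argument is more elementary than the paper's route through the noncommutative Khintchine inequality (applied after a Gaussian/CLT approximation of the shadow noise) and lands on the same scaling; your choice of matrix Bernstein in place of Khintchine for the refined bound is likewise a legitimate alternative that avoids the CLT step.

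There is, however, a genuine gap in your $\wt\Or(N^2\epsilon^{-2})$ argument. The claim that single-qubit Pauli measurements make shadow estimates of disjoint-support Paulis \emph{independent} is false: the random single-qubit Cliffords are independent, but the measurement outcomes $(b_1,\dots,b_N)$ are drawn from $U\rho(t)U^\dagger$ and inherit whatever correlations the state has. The single-shot shadow covariance between completely commuting Paulis is $\Sigma_{ii'}=3^{\omega(P_i,P_{i'})}\langle P_iP_{i'}\rangle-\langle P_i\rangle\langle P_{i'}\rangle$, which for disjoint supports ($\omega=0$) equals the connected correlator of the state, not zero. Consequently $\mathbb{E}[Z_{c,n}Z_{c,n}^\top]$ is \emph{not} supported only on overlapping pairs; for a GHZ state its operator norm is $\Theta(N_P)$, your matrix-variance parameter becomes $\Or(N^2/N_s)$, and you recover only the $N^3$ bound. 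The constant-correlation-length hypothesis is precisely what saves this---you invoke it in the preamble but then never use it. Under $|\langle P_i(t)P_{i'}(t)\rangle-\langle P_i(t)\rangle\langle P_{i'}(t)\rangle|\le Ce^{-d(P_i,P_{i'})/\xi}$, the row sums of $\Sigma$ are $\Or(\xi^D)$ on a $D$-dimensional lattice (the $\Or(1)$ overlapping pairs contribute $\Or(3^k)$, the rest are controlled by the exponential decay), so $\|\Sigma\|=\Or(1)$. With this correlation-decay argument replacing your sparsity claim, the remainder of your matrix-Bernstein calculation goes through and yields the stated $N^2$ bound, matching the paper's conclusion by a different concentration tool.
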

For the proof of this theorem, we refer to Sec.~I of the SM \cite{SM}, in which we  bound singular value perturbation that comes from statistical noise.
From its definition, we can see $\hat{D}_{\mathrm{null}}$ that is a decreasing function of $\epsilon$, and consequently, for smaller $\epsilon$, we will have a tighter upper bound for $D_{\mathrm{null}}$ and the number of conserved quantities $N_c$. 
If we keep the singular value perturbation below $\epsilon$, then the $D_{\mathrm{null}}$ 0-singular values of $W$ will still be below $\epsilon$ after perturbation, thus ensuring $D_{\mathrm{null}}\leq \hat{D}_{\mathrm{null}}^{\mathrm{median}}$. This will require more samples as $\epsilon$ decreases,  as can be seen from Theorem~\ref{thm:counting_conservation_laws}.

We can also guarantee that by collecting all the left-singular vectors of $\hat{W}$ corresponding to singular values below the threshold $\epsilon$, we will have all the conservation laws approximately contained in the span. More precisely, each conservation law, when expressed as a norm-1 vector, will have an overlap with the subspace spanned by these singular vectors, and this overlap is lower bounded by $\sqrt{1-\|E\|^2/\epsilon^2}$.
Therefore, when $\|E\|\ll\epsilon$, we will have an accurate description of all conservation laws. For detailed proof of this bound, see Sec.~II of the SM \cite{SM}.

Next, we will provide guarantees that the candidates for conserved quantities from the learning algorithm can be efficiently verified using the procedure described previously. First, we consider the scenario where the conservation law is specific to a single initial state, i.e., the distribution $\mathcal{D}$ is completely concentrated on $\rho$.
\begin{thm}
    \label{thm:testing_conservation_laws}
    We assume that $\mathcal{D}$ is concentrated on a single $\rho$.
    Let $f_i(t)=\Tr[\rho(t)O_i]$, for $i=1,2,\cdots,\chi$. We further assume that $\left|\frac{\dd^\ell f_i(t)}{\dd t^\ell}\right|\leq \Or(\Gamma^\ell \ell!)$ for all $\ell\geq 1$. Then for $T>0$ we can distinguish between the two hypotheses in \eqref{eq:hypotheses} for each $i$ with probability at least $1-\delta$ using 
    $
        \wt{\Or}\left(\Gamma T \epsilon^{-2}\log(\delta^{-1})\max_i\|O_i\|_{\mathrm{shadow}}^2\right)
    $
    samples.
\end{thm}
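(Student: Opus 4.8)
The plan is to reduce the hypothesis-testing problem to two estimation subtasks that can both be solved with classical shadows and robust polynomial interpolation. First I would observe that since $\mathcal D$ is concentrated on $\rho$, the quantity to estimate is simply $d(O_i,\rho)=\max_{t\in[0,T]}|f_i(t)-\frac1T\int_0^T f_i(s)\,\dd s|$, and distinguishing the two hypotheses amounts to deciding whether $d(O_i,\rho)=0$ or $d(O_i,\rho)\ge\epsilon$. The key structural input is the Gevrey-type bound $|\dd^\ell f_i/\dd t^\ell|\le\Or(\Gamma^\ell \ell!)$, which controls the rate of growth of derivatives and hence guarantees that $f_i$ is well-approximated on $[0,T]$ by a polynomial of degree $d=\Or(\Gamma T\log(1/\epsilon'))$ (or similar, up to logs): truncating the Taylor-type/Chebyshev expansion, the tail is geometrically small once the degree exceeds a constant times $\Gamma T$. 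So to additive error $\epsilon/4$ it suffices to recover a degree-$d$ polynomial proxy for $f_i$.

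The second step is the sampling analysis. I would sample $N_T$ times $t_j$ in $[0,T]$ (uniformly or on a suitable grid) and, at each $t_j$, estimate $f_i(t_j)=\Tr[\rho(t_j)O_i]$ for all $i$ simultaneously via classical shadows; by the shadow-norm bound quoted in the excerpt, $\wt{\Or}(\|O_i\|_{\mathrm{shadow}}^2\epsilon^{-2}\log(\chi/\delta))$ samples per time suffice to get all $\chi$ values to additive error $\Or(\epsilon)$ with high probability. Then I would invoke the robust polynomial interpolation result of Ref.~\cite{KaneEtAl2017robust}: given noisy samples of a function that is $\epsilon'$-close to a degree-$d$ polynomial, with $\wt\Or(d)$ sample points one recovers a polynomial whose sup-norm error over $[0,T]$ is $\Or(\epsilon')$ plus a controlled multiple of the sampling noise. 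Composing the polynomial-approximation error from step one with the interpolation guarantee yields a uniform estimate $\hat f_i(t)$ with $\|\hat f_i-f_i\|_{\infty,[0,T]}\le\epsilon/4$. From $\hat f_i$ we compute $\hat d(O_i,\rho)=\max_t|\hat f_i(t)-\frac1T\int\hat f_i|$, which is within $\epsilon/2$ of $d(O_i,\rho)$; thresholding at $\epsilon/2$ then separates the two hypotheses. Multiplying the per-time sample cost $\wt\Or(\|O_i\|_{\mathrm{shadow}}^2\epsilon^{-2}\log(\delta^{-1}))$ by the number of time points $\wt\Or(d)=\wt\Or(\Gamma T)$ gives the claimed total $\wt{\Or}(\Gamma T\epsilon^{-2}\log(\delta^{-1})\max_i\|O_i\|_{\mathrm{shadow}}^2)$.

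I would be careful about two points. One is the interplay between the sampling noise and the interpolation: robust interpolation amplifies noise by a factor depending on the degree and the node placement (Chebyshev-like nodes keep this polylogarithmic, but naive uniform nodes can blow up), so I would choose the interpolation nodes as prescribed in \cite{KaneEtAl2017robust} and absorb the amplification into the $\mathrm{polylog}$ factors. The other is that classical-shadow estimates at different times are obtained from independent randomized-measurement datasets, so the errors across the $\wt\Or(\Gamma T)$ time points are independent; a union bound over times and over the $\chi$ observables costs only an extra $\log(\Gamma T\chi/\delta)$ inside the $\wt\Or$. The main obstacle is making the degree bound from the Gevrey estimate quantitatively tight enough that $d=\wt\Or(\Gamma T)$ rather than something worse, and threading that degree cleanly through the robust-interpolation error bound so that the final noise budget still closes at $\epsilon$; everything else is a routine union-bound and concentration argument.
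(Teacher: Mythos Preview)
Your proposal is correct in outline and lands on the right sample complexity, but it takes a different decomposition from the paper. The paper does \emph{not} approximate $f_i$ by a single polynomial of degree $\widetilde{\Or}(\Gamma T)$ on all of $[0,T]$. Instead it partitions $[0,T]$ into $\lceil\Gamma T\rceil$ segments of length at most $1/\Gamma$; on each segment the Gevrey bound makes the Taylor remainder around the midpoint decay like $2^{-K}$, so a degree $K=\Or(\log(1/\epsilon))$ polynomial suffices locally. Robust interpolation (Kane et al., with Chebyshev-distributed times) is applied on each short segment using $\widetilde{\Or}(K)$ time points, and the pieces are stitched into a piecewise polynomial $\hat g_i$. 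A union bound over the $\Gamma T$ segments gives the stated cost. The thresholding step you describe is identical to the paper's.

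Your global route can be made to work, but the step you flag as the ``main obstacle'' is real and is exactly where your write-up is too loose. A Taylor expansion alone will not give a degree-$\widetilde{\Or}(\Gamma T)$ approximation on $[0,T]$, because the Gevrey bound only guarantees radius of convergence $1/\Gamma$, which can be much smaller than $T$. What you need is a Bernstein-type argument: the Gevrey estimate implies analyticity (with an $\Or(1)$ bound) in a complex neighborhood of $[0,T]$ of width $\Theta(1/\Gamma)$, hence in a Bernstein ellipse with parameter $\rho\approx 1+c/(\Gamma T)$ after rescaling to $[-1,1]$; then the best degree-$d$ polynomial has sup-error $\Or(\rho^{-d})=\Or(e^{-cd/(\Gamma T)})$, giving $d=\Or(\Gamma T\log(1/\epsilon))$. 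Once you state this explicitly, feeding that degree into Kane et al.\ with Chebyshev-sampled nodes closes the argument. Compared to the paper, your version avoids the piecewise bookkeeping at the cost of invoking Bernstein's theorem; the paper's version is more elementary (only Taylor on short intervals) and sidesteps the analyticity-in-a-strip argument entirely.
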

We refer to  Sec.~III of the SM \cite{SM} for detailed proof.
We note that $\left|\frac{\dd^\ell f_i(t)}{\dd t^\ell}\right|\leq \Or(\Gamma^\ell \ell!)$ is a very reasonable assumption to make. We will show in Sec.~VI
of the SM \cite{SM} that this assumption holds with $\Gamma=\Or(1)$ 
when the dynamics is described by the von Neumann equation, and the Hamiltonian satisfies certain conditions.
These Hamiltonians include geometrically local Hamiltonians and certain power-law interaction Hamiltonians. Without such an assumption, we can also choose $\Gamma=\|H\|$ and then this inequality holds for all Hamiltonians. An extension to the Lindbladian case is straightforward. 

Next, we consider a generic initial state distribution $\mathcal{D}$. In this scenario, we can sample $\rho_k$, $k=1,2,\cdots,N_I$ from the distribution $\mathcal{D}$, and test if the observables $O_1, O_2, \cdots, O_{\chi}$ are conserved for the sampled initial states. This naturally leads to the question of whether we can generalize the testing results for the sampled initial states to the entire distribution. The above involves generalization errors of the form
\begin{equation}
    \Big|\mathbb{E}_{\rho\sim\mathcal{D}}d(O_i,\rho)-\frac{1}{N_I}\sum_{k=1}^{N_I}d(O_i,\rho_k)\Big|.
\end{equation}
In Sec.~IV in the SM \cite{SM}, we show that we can use $N_I=\Or(\epsilon^{-2}\log(\chi \delta^{-1})\max_i\|O_i\|^2)$ to ensure that the generalization errors for all observables are below $\epsilon/4$ with probability at least $1-\delta/2$. For each sampled $\rho_k$, we need $\wt{\Or}\left(\Gamma T \epsilon^{-2}\log(\delta^{-1})\max_i\|O_i\|_{\mathrm{shadow}}^2\right)$ samples to compute $d(O_i,\rho_k)$, $i=1,2,\cdots,\chi$ according to Theorem~\ref{thm:testing_conservation_laws}, which multiplied by $N_I$ yields the total sample complexity for estimating $\mathbb{E}_{\rho\sim\mathcal{D}}d(O_i,\rho)$ up to precision $\epsilon/2$. Therefore
\begin{thm}
    \label{thm:test_conservation_laws_multiple_initial_states}
    Under the same assumptions as in Theorem~\ref{thm:testing_conservation_laws}, except that we do not restrict the form of the initial state distribution $\mathcal{D}$, the hypothesis testing problem in \eqref{eq:hypotheses} can be solved using
    $
    \wt{\Or}\left(\Gamma T \epsilon^{-4}\log(\delta^{-1})\log(\chi \delta^{-1})\max_i\|O_i\|^2\max_i\|O_i\|_{\mathrm{shadow}}^2\right)
    $ 
    samples. 
\end{thm}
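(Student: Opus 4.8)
\emph{Proof proposal.} The plan is to reduce the general-distribution case to the single-state case of Theorem~\ref{thm:testing_conservation_laws} by a standard empirical-average (generalization) argument, so that the real work is allocating the error budget and the failure probability correctly across the two stages. First I would record that the deviation functional is bounded: for any state $\rho$ and any $t\in[0,T]$ one has $|\tr[\rho O_i(t)]|\le\|\rho\|_1\,\|O_i(t)\|\le\|O_i\|$, since the Heisenberg evolution under the von Neumann equation is unitary and under a Lindbladian its dual map $e^{t\mathcal{L}^\dagger}$ is unital and completely positive, hence an operator-norm contraction; the time average is bounded the same way, so $d(O_i,\rho)\le 2\|O_i\|$. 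Therefore, for a fixed $O_i$, the i.i.d.\ random variables $d(O_i,\rho_1),\dots,d(O_i,\rho_{N_I})$ with $\rho_k\sim\mathcal{D}$ lie in an interval of length $\Or(\|O_i\|)$, and Hoeffding's inequality with a union bound over $i=1,\dots,\chi$ shows that $N_I=\Or(\epsilon^{-2}\log(\chi\delta^{-1})\max_i\|O_i\|^2)$ suffices to guarantee that
\begin{equation}
\Big|\mathbb{E}_{\rho\sim\mathcal{D}}\,d(O_i,\rho)-\frac{1}{N_I}\sum_{k=1}^{N_I}d(O_i,\rho_k)\Big|\le\frac{\epsilon}{4}\ \text{ for all } i
\end{equation}
holds with probability at least $1-\delta/2$.

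Next I would estimate each individual term $d(O_i,\rho_k)$ from data by invoking the single-state procedure. The robust polynomial interpolation underlying Theorem~\ref{thm:testing_conservation_laws} does not merely decide the binary hypothesis for a fixed $\rho$; it produces an estimate $\hat d(O_i,\rho_k)$ of $d(O_i,\rho_k)$ with additive error at most $\epsilon/4$, and it does so for all $i=1,\dots,\chi$ in parallel from a single collection of randomized measurements (the $\log\chi$ dependence of the classical shadow is already absorbed in $\wt\Or$). Running this on each of the $N_I$ sampled states with per-state failure probability $\delta/(2N_I)$ costs $\wt\Or\big(\Gamma T\epsilon^{-2}\log(N_I\delta^{-1})\max_i\|O_i\|_{\mathrm{shadow}}^2\big)$ samples per state; since $N_I$ is polynomial in the parameters, $\log(N_I\delta^{-1})=\Or(\log(\delta^{-1}))$ up to polylogarithmic factors and is swallowed by $\wt\Or$. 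A union bound over $k=1,\dots,N_I$ then gives, with probability at least $1-\delta/2$, that $|\hat d(O_i,\rho_k)-d(O_i,\rho_k)|\le\epsilon/4$ for all $i$ and $k$.

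Combining the two stages, with probability at least $1-\delta$ the estimator $\frac{1}{N_I}\sum_{k}\hat d(O_i,\rho_k)$ lies within $\epsilon/4+\epsilon/4=\epsilon/2$ of $\mathbb{E}_{\rho\sim\mathcal{D}}\,d(O_i,\rho)$ for every $i$. Thresholding this estimator at $\epsilon/2$ thus separates the hypothesis $\mathbb{E}_{\rho\sim\mathcal{D}}\,d(O_i,\rho)=0$ (estimator $\le\epsilon/2$) from $\mathbb{E}_{\rho\sim\mathcal{D}}\,d(O_i,\rho)\ge\epsilon$ (estimator $\ge\epsilon/2$), solving \eqref{eq:hypotheses}. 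The total sample complexity is the product of $N_I=\Or(\epsilon^{-2}\log(\chi\delta^{-1})\max_i\|O_i\|^2)$ and the per-state cost $\wt\Or(\Gamma T\epsilon^{-2}\log(\delta^{-1})\max_i\|O_i\|_{\mathrm{shadow}}^2)$, which gives $\wt\Or\big(\Gamma T\epsilon^{-4}\log(\delta^{-1})\log(\chi\delta^{-1})\max_i\|O_i\|^2\max_i\|O_i\|_{\mathrm{shadow}}^2\big)$, as claimed.

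The step that I expect to demand the most care is not any single inequality but the bookkeeping: confirming that $d(O_i,\rho)$ is genuinely bounded by $\Or(\|O_i\|)$ for every class of dynamics covered (so that Hoeffding legitimately applies to the $\rho$-sampling, including the open-system and Floquet cases), and checking that rerunning the single-state routine $N_I$ times with the tightened failure probability $\delta/(2N_I)$ costs only polylogarithmically more rather than polynomially more — i.e., that the two $\epsilon^{-2}$ factors multiply cleanly while the logarithms merely add.
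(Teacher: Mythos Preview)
Your proposal is correct and follows essentially the same route as the paper: bound $d(O_i,\rho)\le 2\|O_i\|$, apply Hoeffding with a union bound over the $\chi$ observables to fix $N_I=\Or(\epsilon^{-2}\log(\chi\delta^{-1})\max_i\|O_i\|^2)$, then invoke the single-state result of Theorem~\ref{thm:testing_conservation_laws} on each sampled $\rho_k$ and multiply. Your version is in fact slightly more careful than the paper's sketch in that you explicitly union-bound over the $N_I$ states (absorbing $\log N_I$ into $\wt\Or$) and spell out the $\epsilon/4+\epsilon/4$ error budget and the final thresholding at $\epsilon/2$.
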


So far, we have considered quantities that are on average conserved for an ensemble of states. It is natural to ask whether we can determine if an observable $O$ is conserved for all states, namely $[H,O]=0$. Through a quantum query complexity lower bound, we can show that this task cannot be accomplished efficiently in the worst case. The high-level idea of this argument goes as follows: suppose we have a black-box oracle $U$ encoding a bit-string $\boldsymbol{x}=(\boldsymbol{x}_1,\boldsymbol{x}_2,\dots)$ through $U\ket{n} = (-1)^{\boldsymbol{x}_n}\ket{n}$, then letting $H=U$ we can implement $e^{-iHt}$ using two queries to $U$. If an algorithm can distinguish between $\|[H,O]\|=0$ or $\geq 1$ with high probability with $Q$ queries to $e^{-iHt}$, we can then show that it can evaluate $\operatorname{OR}(\boldsymbol{x})$ with $2Q$ queries to $U$. The query complexity lower bound of the OR function \cite{beals2001quantum} then tells us that $Q=\Omega(2^{N/2})$. For a detailed statement of the result and its proof, see Sec.~V of the SM \cite{SM}.

\section{Numerical experiments}

In this section, we illustrate our algorithm with numerical examples. We consider a $\mathbb{Z}_2$ gauge theory and a disordered Heisenberg model in one dimension.

\subsection{Identifying conservation laws in a lattice gauge theory}

\begin{figure}
    \centering
    \includegraphics[width=\linewidth]{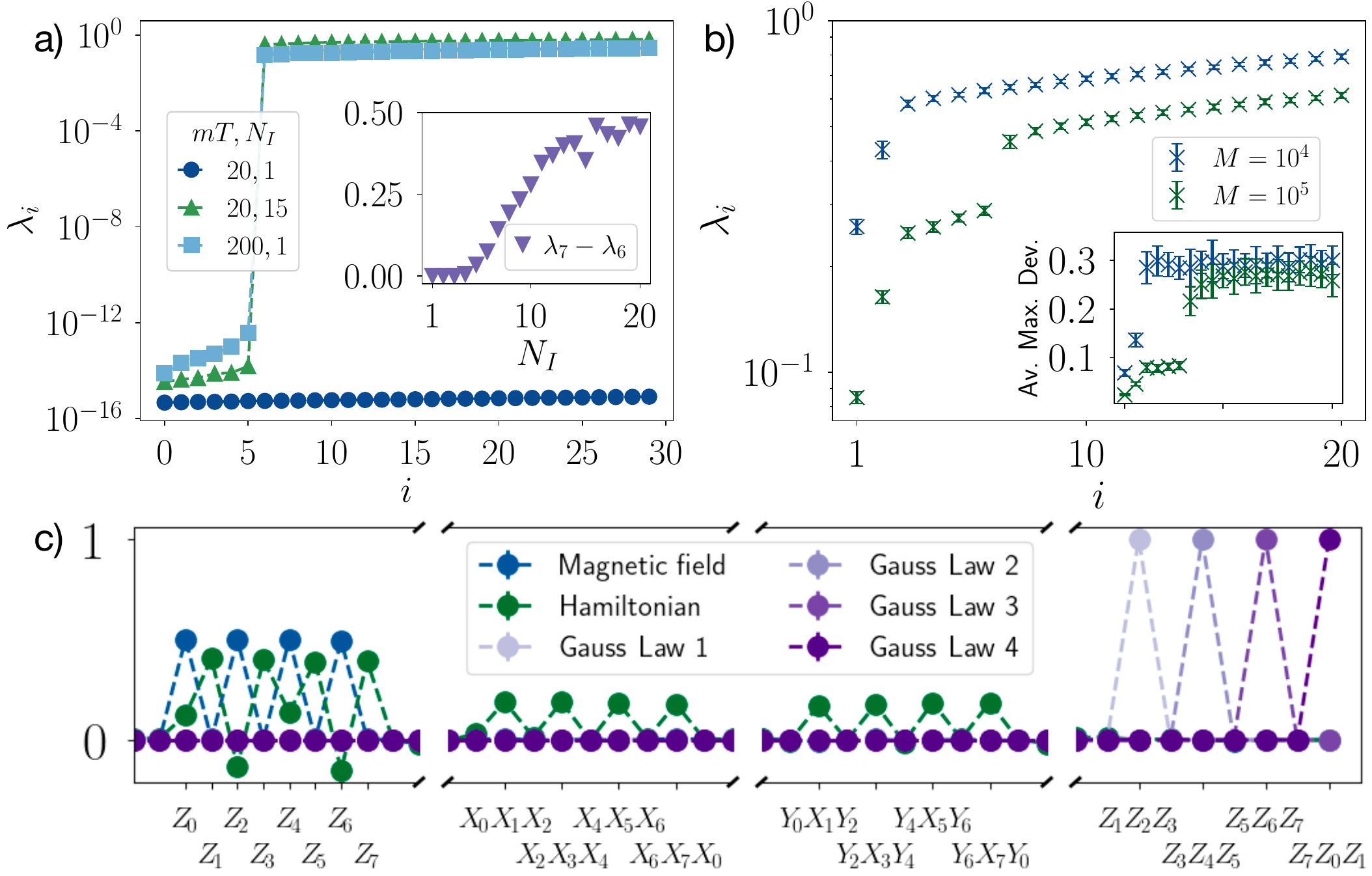}
    \caption{\textit{Learning conservations laws in a $\mathbb{Z}_2$-Gauge theory.} In panel a), we display the 30 smallest eigenvalues of the exact data matrix  for various choices of evolution times $T$ and number of initial states $N_I$. At sufficiently long times $T$, $N+2$ singular values are gapped out (light and dark blue). The use of multiple initial states (green triangles)  allows to shorten the evolution time considerably.  Inset illustrates this effect, showing the gap as a function of the number of initial states for fixed time $T=20$.
    In panel b), we display the 20 smallest eigenvalues of the noisy data matrix, construct from $M=10^4$ and $M=10^5$ per initial state ($N_I=15$) and final time $T=20, N_T=41$. For $M=10^4$ ($M=10^5$), increasing $M$,  two (six) singular are gapped out. Testing these against independently obtained data yields small variations over time (inset).  In panel c), we display the Pauli basis expansion of the corresponding lowest 6 singular vectors. We identify the magnetization, Hamiltonian, and four Gauss laws.
    In all panels, $N=8$ and $N_T = 2 N_P / N_I  = 624$. Points with error bars are the average and standard deviation over 25 experiments with identical parameters. 
    }
    \label{fig:1}
\end{figure}

As a first numerical example, we consider a $\mathbb{Z}_2$ lattice gauge theory with staggered matter fields in one spatial dimension. The Hamiltonian is specified by
\begin{align}
    H_{\mathbb{Z}_2}=&\frac{1}{2a} \sum_{i=0}^{N/2-1}  \left(\sigma_{2i}^+\sigma_{2i+1}^x \sigma_{2i+2}^- + \text{h.c.} \right) \nonumber \\ &+ m \sum_{i=0}^{N/2-1} \frac{(-1)^i}{2} \left( \mathbb{I}_2 + \sigma^z_{2i} \right) + e \sum_{i=0}^{N/2-1} \sigma^z_{2i+1}.
\end{align}
where we choose periodic boundary conditions. By direct inspection of the Hamiltonian, we find that we can expect $N/2+2$ conservations laws, given by the Hamiltonian itself, the magnetization
\begin{align}
M = \frac{1}{2} \sum_{i=0}^{N/2-1} \left( \mathbb{I}_2 + \sigma^z_{2i} \right) 
\end{align}
and $N/2$ Gauss laws
\begin{align}
    G_{2j} = e^{i\pi Q_{2j} } \sigma^z_{2j-1}\sigma^z_{2j+1}.
\end{align}
We note that magnetization and Hamiltonian are linear combinations of geometrically $1$-local and $3$-local terms, which  have support on the entire system. In contrast, the $N/2$ Gauss laws are strictly geometrically $3$-local. In the following, we consider $N=8$ qubits,  set the mass parameter to unity  $m=1$, and choose electric field $e=3/2m$ and lattice spacing  $a=m/3$. 

First, we investigate our protocol in the absence of statistical noise due to a finite number of measurements [Fig.~1a)]. We aim to learn all conservation laws with weight $k\leq 3$. We find that for sufficiently long times $T=200$, data collected from dynamics starting from a single initial random product state $N_I=1$ is sufficient to identify all expected $N+2$  conservation laws: The singular values $\lambda_{i}$ for $1\leq i \leq N +2$  are close to zero (a non-zero value originates from finite machine precision) with a large gap to $\lambda_{N+3}$. At shorter times $T=20$ and $N_I=1$, the spectrum of singular values appears to be continuous, and conservation laws are not apparent.
In contrast, data collected from several random initial product states is substantially more expressible (see Ref.~\cite{evans2019scalable} for a similar observation in the context of Hamiltonian learning). Even at short times $T=20$, the expected conservation laws can be identified. We emphasize that hereby, we keep the total number of points $N_T N_I \approx 2N_P$ where data is taken to be constant to enable a fair comparison ($N_T=2 N_P = 624$ for $N_I=1$ and $N_T =41$ for $N_I=15$). In all cases, the time points are equally spaced. 

Secondly, we simulate our complete protocol, including a finite number of measurements $M$ per time point and initial state. We choose $N_I = 15$ randomly chosen initial product states, a total time evolution time of $T=20/m$ with $N_T=41$ steps. We find that for a moderate number of  $M= 10^5$ randomized measurements, $N+2$ singular values are gapped out [Fig.~1b)]. Analyzing a Pauli basis expansion of the  corresponding singular vectors of our data matrix, we find that these correspond to the expected conservation laws, magnetization, energy (Hamiltonian) and $N$ Gauss laws [Fig.~1c)]. 

Finally, to test that the learned quantities are indeed conserved over times, we employ an independent data set of same size. We estimate expectation values of the learned quantities at different times and compute their maximum deviation from their mean values, averaged over initial states. We note that this represents a simplified testing procedure than employed in Secs.~\ref{section2} and \ref{section3}, and devote the full numerical implementation of the robust polynomial interpolation for testing to future work. Indeed, we find that quantities corresponding to small singular values have small variations over time up to statistical noise originating from a finite number of randomized measurements $M$. 
The testing procedure also helps us better distinguish conservation laws from observables that are close to being conserved, by opening up the gap between singular values, as can be seen in Figure~\ref{fig:1}b. By using a different set of data to perform testing, we can exclude quantities with only small variation for a single noise realization or a single set of initial states. This is similar to detecting overfitting in supervised learning.

While we have so far concentrated on unitary dynamics, we emphasize that our protocol can serve to learn conservation laws of arbitrary quantum dynamics. To illustrate this point, we add local dephasing with strength $\gamma$, corresponding to jump operators $L_i=\sqrt{\gamma}\sigma^z_i$, and solve the corresponding Lindblad equation (all other Hamiltonian parameters remain the same). Since magnetization and Gauss laws are diagonal in the computational $Z$-basis, they remain conserved also for $\gamma>0$. In contrast, energy conservation is lost as shown in Fig.~2a where $\lambda_2$ is missing (the indices of the subsequent singular values $\lambda_3, \lambda_4, \dots$ have been shifted by $1$ for clarity). 

Our numerical experiments demonstrate that we can learn conservation laws, which are linear combinations of $k$-local Pauli strings with a moderate number of randomized measurements $M$. We can decrease  the required number of measurements further if we restrict ourselves to learning conservation laws with support on subsystems only, i.e.\ disregard quantities such as magnetization or Hamiltonian, which are linear combinations of few body terms but have support on the entire system. To achieve this, we construct reduced data matrices $W_A$ from measurement data obtained from the subsystem  $A$ only. This is illustrated in Fig.~2b, where we plot the singular values of data matrices $W_{A_j}$ with $A_j = [j-1,j,j+1]$ containing the three sites $j-1,j,j+1$ as function of j (periodic boundary conditions are implied). With only $M=10^4$ randomized measurements [c.f.~$M=10^5$ in Fig.~1b)]
per time point and initial state, we can identify the expected $N$ Gauss laws contained in subsystems $A_j$ with $j \mod 2 =0$.

\begin{figure}
    \centering
    \includegraphics[width=\linewidth]{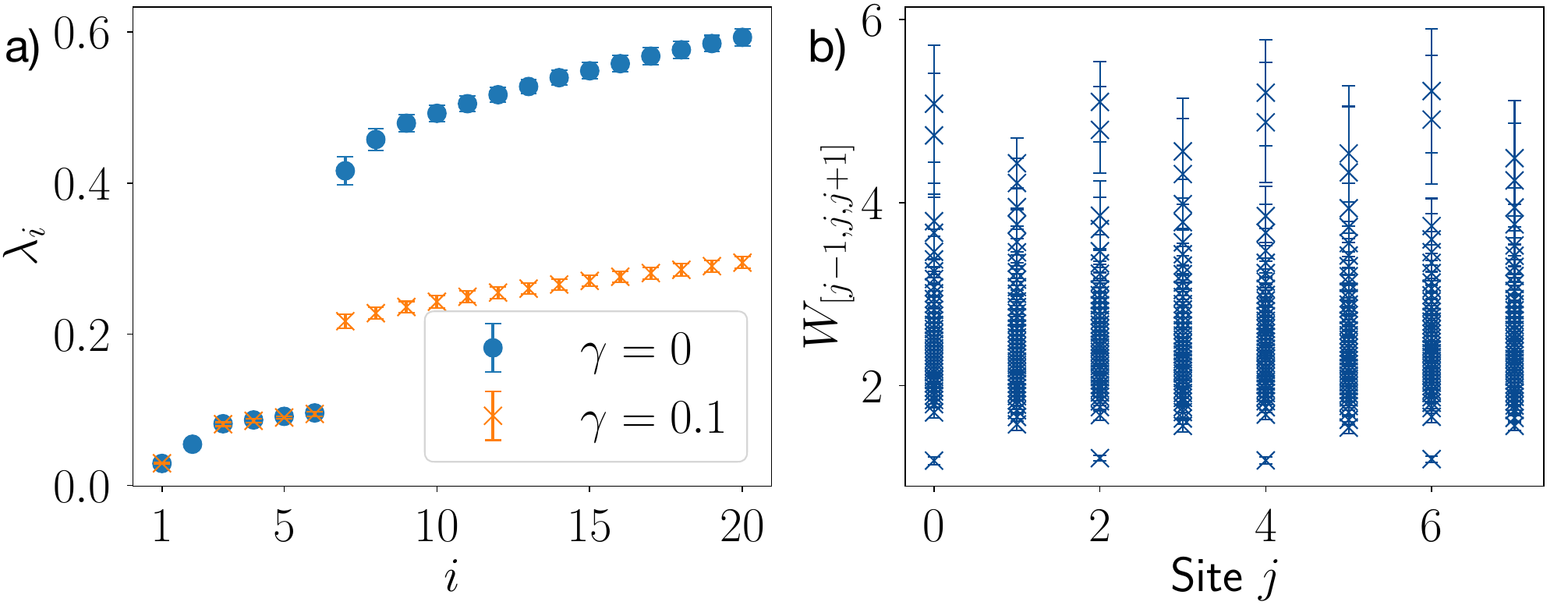}
    \caption{\textit{Learning conservation laws in open quantum systems.} In panel a), we display the 20 smallest singular values of $W$ for both Hamiltonian (blue) and Lindbladian dynamics (orange) with local dephasing with rate $\gamma=0.1$. While magnetization and Gauss laws are conserved in both cases, the Hamiltonian (energy) is only conserved for Hamiltonian dynamics ($\lambda_2$ is absent for $\gamma =0.1$).  We choose $M=10^6$ measurements and use a Gaussian noise approximation to simulate the resulting shot noise. In all panels, $N=8$, $N_T  = 41$, $N_I=15$ and points with errorbars are the average and standard deviation over 25 experiments with identical parameters. Panel b) displays the singular values obtained from the data matrix $W_{[j-1,j,j-1]}$ restricted to three adjacent subsystem sites as function of $j$. This allows to learn the Gauss laws, corresponding to the gapped singular values at even $j$, with considerably fewer measurements $M=10^4$ (c.f.\ Fig.~\ref{fig:1}b).} 
    \label{fig:2}
\end{figure}

\subsection{Identifying conservation laws in a many-body-localized system}

Next, we consider a disordered one-dimensional XXZ-spin chain with nearest neighbor interactions which serves as  a standard model for investigating many-body localization and thermalization \cite{https://doi.org/10.1002/andp.201700169,doi:10.1146/annurev-conmatphys-031214-014726,RevModPhys.91.021001}. It is described by the following  Hamiltonian
\begin{align}
    &H_{\text{XXZ}}\nonumber \\
    &=J_x \sum_i\left(\sigma_i^x \sigma_{i+1}^x+\sigma_i^y \sigma_{i+1}^y\right)+J_z \sum_i \sigma_i^z \sigma_{i+1}^z+\sum_i h_i \sigma_i^z
    \label{eq:XXZ}
\end{align}
where the local disorder potentials  $h_i$ ($i=1,\dots N$) are randomly distributed in the interval $[-w,w]$, and $w$ is the disorder strength. We assume in the following $J_x=J_z=1$.

\begin{figure}
    \centering
    \includegraphics[width=0.38\textwidth]{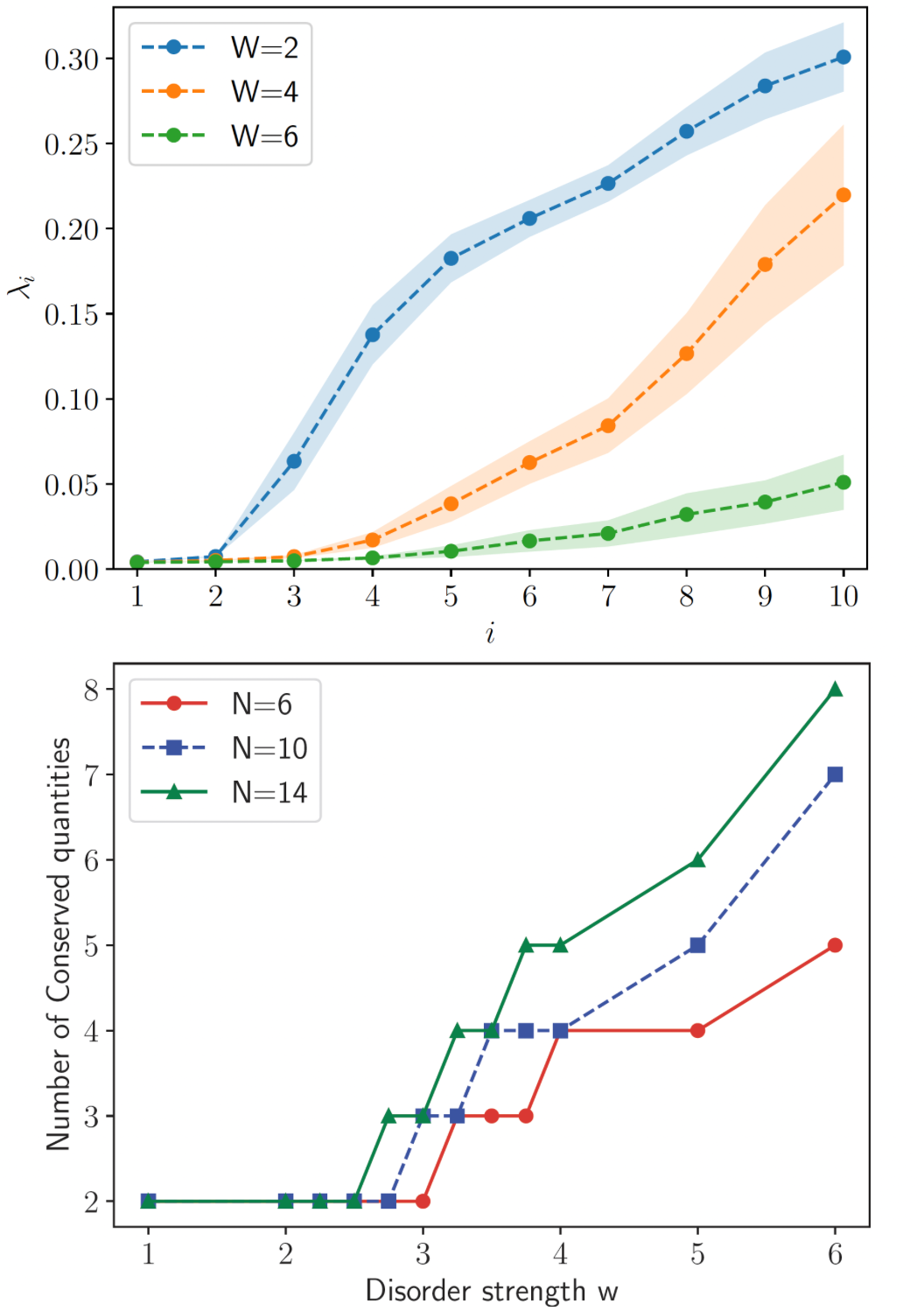}
    \caption{\textit{Learning conservation laws in disordered spin-chains.}(a) We display the 10 lowest squared singular values of the data matrix $\hat{W}$ constructed using $N_I=17$ initial states evolved according to $H_{\text{XXZ}}$ with  different disorder strengths $w=2,4,6$ (blue, orange, green)  to a final time $T=40$ and evaluated at $N_T=41$ equally spaced timepoints. We add Gaussian noise, simulating shot noise arising from $M=5 \times 10^5$ measurements. Each point is averaged over 11 random disorder patterns, the error bars indicate the standard error of the mean. The two smallest singular values $\lambda_1,\lambda_2$ correspond to magnetization and Hamiltonian, respectively. As the disorder strength $w$ becomes larger, the singular values $\lambda_{i\geq 3}$ decrease significantly indicating an increasing number of approximate conservation laws. (b) We show the number of singular values below a threshold $\epsilon=0.02$ as function of $w$ for different system sizes. As in panel(a), we choose $M=5\times10^5, T=40, N_T=41$. The number of initial states for system size $N=6,10,14$ is $N_I=9,17,24$, respectively, to ensure $N_I N_T\approx2N_P$. Points correspond to the median of 11 random Hamiltonian configurations. We observe a sharp increase with increasing disorder strength.}
    \label{fig:3}
\end{figure}

Numerical studies \cite{PhysRevB.82.174411,znidaric2008many,PhysRevLett.111.127201,luitz2015many} indicate that this model exhibits a transition from an ergodic phase at weak disorder to an ergodicity breaking many-body localized phase at sufficiently strong disorder \footnote{The precise transition disorder strength in the thermodynamic limit is subject to ongoing research, see e.g.~\cite{long2023phenomenology} and references therein.}. While in the ergodic phase, nearly all eigenstates obey the Eigenstate Thermalization Hypothesis \cite{deutsch1991quantum,srednicki1994chaos}  in the MBL phase ETH is not valid, and the system is characterized by an extensive number of \emph{quasi-local} conservation laws $\tau_i$  ($i=1,\dots N$), called l-bits \cite{PhysRevLett.111.127201,PhysRevB.90.174202,ROS2015420,imbrie2017local}. In terms of these conservation laws, the Hamiltonian \eqref{eq:XXZ} can be written as 
\begin{equation}
H_{\text{diag}}=\sum_i \xi_i {\tau}_i+\sum_{i<j} J_{i j} {\tau}_i {\tau}_j+\sum_{i<j<k} J_{i k j } {\tau}_i {\tau}_j {\tau}_k+\ldots
\label{eq:lbits}
\end{equation}
where, in the MBL phase, the ${\tau}_i$ are quasi-local, in the sense that they can be approximated by geometrically local operators to exponential precision, and the coupling coefficients   $J_{ij}$, $J_{ikj}$ decay exponentially with distance $|i-j|$. We emphasize that, by switching to an energy eigenbasis, we can always rewrite the Hamiltonian \eqref{eq:XXZ} in the form of Eq.\ \eqref{eq:lbits}, also in the thermalizing phase. In general, however, the conservation laws  ${\tau}_i$ will be completely non-local, high-weight operators  which vanishing overlap to the microscopic degrees of freedom $\sigma^z_i$.  In this case, Eq.~\eqref{eq:lbits} is of little use. 
Finally, we note that independent of the disorder strength, there are always two conserved quantities given as sums of local observables, which are the Hamiltonian itself and the total magnetization.

With our learning algorithm, we target conserved quantities given as sums of local observables. While we expect to be able to learn the local XXZ-Hamiltonian and total magnetization for all disorder strengths, the conserved quantities  ${\tau}_i$ are  expected to be inaccessible in the thermal phase due to their non-local nature. In contrast, for strong disorder, we expect an extensive amount of  approximately conserved local quantities, approximating the quasi-local l-bits ${\tau}_i$ to high precision. To test these expectations in small systems, we simulate our protocol, by sampling the random initial product state, picking a random disorder pattern and time-evolve under the corresponding XXZ-Hamiltonian Eq.~\eqref{eq:XXZ} using exact diagonalization. We evaluate Pauli expectations of all $N_P$ geometrically up to $3$-local Pauli operators at $N_T=41$  equidistant time points up to a final time $T=40$. We repeat this for $N_I\approx2N_P/N_T$ initial states per fixed disorder pattern.  To simulate realistic shot noise arising from a finite number of randomized measurements, we use a Gaussian noise approximation adding independent Gaussian noise to each expectation value with a variance corresponding to $M=5\times10^5$ randomized measurements per time point and initial state \footnote{In practice, we reconstruct all Pauli expectation values at a given time from the same experimental randomized measurement data. Thus, shot noise on different expectation values is in principle correlated. For large number of measurements, we however expect these correlation to be weak, and indeed confirm numerically that results obtained using the Gaussian noise approximation and actual randomized measurements are consistent.}. Finally, we construct the data matrix $\hat{W}$ and perform SVD. 

In Fig.~\ref{fig:3}a, we display the  10 lowest singular values of $\hat{W}$ as a function of disorder strength, each point corresponds to an average over 11 random Hamiltonian configurations. Consistent, with our expectation, we find independent of the disorder strength two small singular values, corresponding to magnetization and Hamiltonian. As remarked before, they still attain a non-zero value due to the finite number of measurements $M$. In addition, we find that the singular values $\lambda_i$ ($i\geq 3$) decrease strongly with increasing disorder strength, indicating an increasing number of approximately conserved quantities. To show this more quantitatively, we plot in Fig.~\ref{fig:3}b), the number of singular values  below a threshold $\epsilon=0.02$ as a function of the disorder strength for various system sizes. We observe a sharp increase at a disorder strength, which is consistent with previous findings on the onset of many-body localization effects in finite-size systems \cite{PhysRevB.82.174411,znidaric2008many,PhysRevLett.111.127201,luitz2015many}. In addition, the number of singular values below the threshold increases with system size, indicating indeed an extensive number of approximately conserved quantities. 

\section{Conclusion and Outlook}
\label{section9}
In this paper, we propose a method for learning conservation laws in arbitrary quantum dynamics. 
Our method can find a set of observables that include all conservation quantities with high probability, and we also propose a method to test the conservation law candidates obtained in this way. The sample complexity and classical processing time are both at most polynomial in the system size.
The conservation laws hold for either a single input state or an ensemble of input states, and for the latter case, we derive a generalization bound ensuring that the result from finitely many samples can be reliably generalized to the entire ensemble.
We provided a proof of principle of our method using numerical experiments in a one-dimensional $\ZZ_2$ lattice gauge theory and one-dimensional MBL systems.
Beyond these examples, we envision a wide range of applications for our protocol, ranging from Hilbert space fragmentation \cite{moudgalya2022quantum}  to random circuits with symmetries \cite{fisher2023random} and the study of general quantum channels \cite{albert2019asymptotics}. In addition, knowledge of conserved quantities in dynamics enables powerful error mitigation techniques for NISQ devices \cite{cai2023quantum} and more efficient (randomized measurement) protocols  for probing many-body entanglement \cite{elben2018renyi,bringewatt2023randomized}.

\begin{acknowledgments}
The authors thank the inspiring discussions with John Preskill, Elmer V.H.\ Doggen and Laimei Nie and Benoit Vermersch. Y.Z. acknowledges funding from the National Science Foundation (PHY-1733907). The Institute for Quantum Information and Matter is an NSF Physics Frontiers Center. Y.T.\ acknowledges funding from the U.S. Department of Energy Office of Science, Office of Advanced Scientific Computing Research, (DE-NA0003525, and DE-SC0020290). Work supported by DE-SC0020290 is supported by the DOE QuantISED program through the theory consortium ``Intersections of QIS and Theoretical Particle Physics'' at Fermilab. The work of A.E. was performed in part at the Aspen Center for Physics, which is supported by the National Science Foundation grant PHY-2210452.
Furthermore, A.E.\ acknowledges funding by the German National Academy of Sciences Leopoldina under the grant number LPDS 2021-02 and by the Walter Burke Institute for Theoretical Physics at Caltech.
\end{acknowledgments}

\bibliography{apssamp}

\clearpage
\onecolumngrid
\appendix

\begin{center}
{\large \textbf{  Supplemental Material: \vspace{0.1cm}\\ Learning conservation laws in unknown quantum dynamics}}
\end{center}

Our supplemental material is organized as follows. In Appendix~\ref{sec:counting_conservation_laws}, we prove Theorem~1 in the main text, providing a rigorous guarantee about the number of conserved quantities we learn. 
In Appendix~\ref{sec:accuracy_of_the_learned_conservation_laws}, we prove that all the conserved quantities are approximately contained in the subspace that we learn.
In Appendix~\ref{sec:test_conservation_laws_overall}, we prove Theorem~2 in the main text, which guarantees that we can test conservation laws for a single initial state efficiently with a high confidence level.  In Appendix~\ref{sec:ensemble_initial_states}, we prove Theorem~3 in the main text, showing that one can efficiently test whether a given observable is conserved on average for an ensemble of initial states.  In Appendix~\ref{sec:query_complexity_lower_bound}, we show that it is impossible to efficiently test whether an observable is conserved for all states in a black-box oracle setting.  In Appendix~\ref{sec:time_derivative_bounds}, we prove a technical lemma that is useful in the proof in Appendix~\ref{sec:test_conservation_laws_overall}.

\section{Counting conservation laws}
\label{sec:counting_conservation_laws}

In this section, we prove that the learning procedure outlined in Section II in the main text can reliably provide an upper bound for the number of conserved quantities. To simplify notation, we consider mostly the case of a single initial state, but comment on the appropriate re-definitions for the case of multiple initial states. The proof generalizes to the latter situation without requiring any change.

We define the expectation value matrix $X=(X_{ij})_{N_P\times N_T}$ (we require $N_T\geq N_P$) , where
\begin{equation}
    X_{ij} = \braket{P_i(t_j)}.
\end{equation}
From experiments we obtain estimates forming a matrix $\hat{X}=(\hat{X}_{ij})_{N_P\times N_T}$. The error is described by the following matrix
\begin{equation}
    E_{ij} = \hat{X}_{ij}-X_{ij}.
\end{equation}
The estimate is unbiased, which means $\mathbb{E}[E]=0$. The variance depends on what method we use to get the estimates $\hat{X}_{ij}$. In this section, we discuss two scenarios: the naive approach, which requires re-preparing the state for each expectation value, and classical shadows, enabling the simultaneous estimation of many expectation values.

The matrix $X$ can be used to characterize conservation laws. If an observable
\begin{equation}
    \sum_i c_i P_i
\end{equation}
is conserved, then we have
\begin{equation}
    \sum_i c_i \braket{P_i(t_j)} = \frac{1}{N_T}\sum_{ij'} c_i \braket{P_i(t_j')}
\end{equation}
for all $j=1,2,\cdots,N_T$. Writing the above equation in the matrix form, we have 
\begin{equation}
    c^T X = \frac{1}{N_T} c^T X \mathbf{1}\mathbf{1}^\top,
\end{equation}
where $\mathbf{1}=(1,1,\cdots,1)^{\top}$ is the $N_T$-dimensional vector with all entries being $1$.
Therefore, every conserved quantity is contained in the null space of the matrix
\begin{equation}
    W^{\top}=\left(I-\frac{1}{N_T} \mathbf{1}\mathbf{1}^\top\right)X^\top,
\end{equation}
where $I$ denotes the identity matrix. The dimension of its null space, which we denote by $D_{\mathrm{null}}$, provides an upper bound for the number of conserved quantities.

When we use multiple initial states, the relationship between $W$ and $X$ becomes slightly different. The entry-wise relationship is $W_{i,jk}=X_{i,jk}-N_T^{-1}\sum_{j'}X_{i,j'k}$, which we can reformulate in the matrix language as
\begin{equation}
\label{eq:from_X_to_W_multiple_initial_states}
    W^{\top} = \left[\left(I-\frac{1}{N_T} \mathbf{1}\mathbf{1}^\top\right)\otimes I\right]X^{\top},
\end{equation}
where $I-\frac{1}{N_T} \mathbf{1}\mathbf{1}^\top$ acts on the index $j$ and $I$ acts on the index $k$.

Note that $D_{\mathrm{null}}$ is not directly available to us. What we can do is to estimate $D_{\mathrm{null}}$ through the number of small singular values of the matrix
\begin{equation}
    \hat{W}^{\top}=\left(I-\frac{1}{N_T} \mathbf{1}\mathbf{1}^\top\right)\hat{X}^\top.
\end{equation}
When there are multiple initial states, the transformation from $\hat{X}$ to $\hat{W}$ is similar to \eqref{eq:from_X_to_W_multiple_initial_states}. Note that in both cases, because $\|I-\frac{1}{N_T} \mathbf{1}\mathbf{1}^\top\|\leq 1$, we have $\|W-\hat{W}\|\leq \|X-\hat{X}\|=\|E\|$.
We denote the number of singular values of the above matrix that are below $\epsilon$ by $\hat{D}_{\mathrm{null}}$. We show that, with enough samples, we can guarantee that with large probability
\begin{equation}
\label{eq:null_space_dimension_upper_bound}
    D_{\mathrm{null}}\leq \hat{D}_{\mathrm{null}}.
\end{equation}

We denote the singular values of $W$ and $\hat{W}$ by $\sigma_i$ and $\hat{\sigma}_i$, $j=1,2,\cdots,N_P$, arranged in ascending order, respectively. By Mirsky's inequality \cite{mirsky1960symmetric}, we have
\begin{equation}
    |\hat{\sigma}_i-\sigma_i|\leq \|W-\hat{W}\|\leq \|E\|,
\end{equation}
where $\|\cdot\|$ denotes the spectral norm. 
Consequently, if $\sigma_i=0$ for any $i$, then $\hat{\sigma}_i\leq \|E\|$. Therefore, we have the following lemma:
\begin{lem}
\label{lem:guarantee_through_E_norm}
When $\|E\|\leq \epsilon$, the inequality \eqref{eq:null_space_dimension_upper_bound} holds.
\end{lem}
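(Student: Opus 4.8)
The plan is to read this off immediately from Mirsky's inequality, which has just been recorded in the form $|\hat{\sigma}_i - \sigma_i| \le \|W - \hat{W}\| \le \|E\|$ with the singular values listed in ascending order; the only real content is converting the hypothesis $\|E\| \le \epsilon$ into a count of small singular values of $\hat{W}$.

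First I would note that $D_{\mathrm{null}}$, the dimension of the null space of $W^\top$, is exactly the number of vanishing singular values of $W$, and since the $\sigma_i$ are arranged in ascending order these are precisely $\sigma_1 = \cdots = \sigma_{D_{\mathrm{null}}} = 0$. Then, for each index $i \le D_{\mathrm{null}}$, Mirsky's inequality together with the hypothesis gives $\hat{\sigma}_i \le \sigma_i + \|E\| = \|E\| \le \epsilon$. Hence $\hat{W}$ has at least $D_{\mathrm{null}}$ singular values below $\epsilon$, and since $\hat{D}_{\mathrm{null}}$ is by definition the number of such singular values, we obtain $D_{\mathrm{null}} \le \hat{D}_{\mathrm{null}}$, which is \eqref{eq:null_space_dimension_upper_bound}. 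The identical argument covers the multiple-initial-state case, since there one also has $\|W - \hat{W}\| \le \|E\|$ from $\|I - \frac{1}{N_T}\mathbf{1}\mathbf{1}^\top\| \le 1$, so Mirsky's inequality applies verbatim.

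There is essentially no obstacle in this lemma itself: it is a direct restatement of the singular-value perturbation bound combined with the definitions. The genuine difficulty in this section lies elsewhere, namely in controlling $\|E\|$ in spectral norm with high probability as a function of the number of samples, using a matrix concentration argument adapted to classical-shadow estimates of geometrically $k$-local Pauli expectation values. That is the step that ultimately produces the $\wt{\Or}(N^3 \epsilon^{-2}\log(\delta^{-1}))$ sample complexity in Theorem~\ref{thm:counting_conservation_laws}, with the improved $\wt{\Or}(N^2 \epsilon^{-2}\log(\delta^{-1}))$ bound under the constant-correlation-length assumption, and (together with a median-of-means boost) yields the high-probability version $N_c \le D_{\mathrm{null}} \le \hat{D}_{\mathrm{null}}^{\mathrm{median}}$.
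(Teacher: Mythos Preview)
Your proposal is correct and matches the paper's approach exactly: the paper derives the lemma from Mirsky's inequality $|\hat{\sigma}_i-\sigma_i|\leq \|W-\hat{W}\|\leq \|E\|$ in precisely the same way, noting that each zero singular value of $W$ corresponds to a singular value of $\hat{W}$ bounded by $\|E\|\le\epsilon$. Your write-up is in fact a slightly more explicit version of the one-line justification the paper gives immediately before stating the lemma.
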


Furthermore, we do not need the upper bound for $\|E\|$ to hold with probability 1. Rather, as long as the upper bound holds with probability that is greater than $1/2$ by a constant, we can simply repeat the procedure multiple times, and take the median of all $\hat{D}_{\mathrm{null}}$ that are computed. This  ensures that the median $\hat{D}_{\mathrm{null}}^{\mathrm{median}}\geq D_{\mathrm{null}}$ with probability at least $1-\delta$ with $\Or(\log(\delta^{-1}))$ using the Chernoff bound. Consequently, it suffices to upper bound $\mathbb{E}\|E\|$.
\begin{lem}
\label{lem:guarantee_through_E_norm_expectation}
When $\mathbb{E}\|E\|\leq \epsilon/4$, then
\begin{equation}
    D_{\mathrm{null}}\leq \hat{D}_{\mathrm{null}}^{\mathrm{median}},
\end{equation}
with probability at least $1-\delta$, where $\hat{D}_{\mathrm{null}}^{\mathrm{median}}$ is the median taken over $\Or(\log(\delta^{-1}))$ independent samples of $\hat{D}_{\mathrm{null}}$.
\end{lem}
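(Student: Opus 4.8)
The plan is to combine Lemma~\ref{lem:guarantee_through_E_norm} with Markov's inequality and a standard median-amplification argument. First I would apply Markov's inequality to the nonnegative random variable $\|E\|$: the hypothesis $\mathbb{E}\|E\|\leq \epsilon/4$ gives $\Pr[\|E\|>\epsilon]\leq \mathbb{E}\|E\|/\epsilon\leq 1/4$. Hence a single run of the estimation procedure produces, with probability at least $3/4$, an error matrix with $\|E\|\leq \epsilon$, and then Lemma~\ref{lem:guarantee_through_E_norm} yields $D_{\mathrm{null}}\leq \hat{D}_{\mathrm{null}}$ for that run.

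Next I would run the procedure $m=\Or(\log(\delta^{-1}))$ times independently, obtaining values $\hat{D}_{\mathrm{null}}^{(1)},\dots,\hat{D}_{\mathrm{null}}^{(m)}$, and set $\hat{D}_{\mathrm{null}}^{\mathrm{median}}$ to be their median. Call a run \emph{good} if $\|E\|\leq \epsilon$ on that run; by the previous paragraph the runs are good independently, each with probability at least $3/4$. The key deterministic observation is that if strictly more than $m/2$ of the runs are good, then strictly more than half of the values $\hat{D}_{\mathrm{null}}^{(l)}$ satisfy $\hat{D}_{\mathrm{null}}^{(l)}\geq D_{\mathrm{null}}$, and this forces their median to satisfy $\hat{D}_{\mathrm{null}}^{\mathrm{median}}\geq D_{\mathrm{null}}$ as well.

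It then remains to bound the probability that at most $m/2$ of the runs are good. Since the number of good runs is a sum of $m$ i.i.d.\ Bernoulli variables of mean at least $3/4$, a Chernoff (or Hoeffding) bound gives $\Pr[\#\text{good}\leq m/2]\leq e^{-cm}$ for an absolute constant $c>0$; choosing $m=\lceil c^{-1}\log(\delta^{-1})\rceil$ makes this at most $\delta$. Combining the two events, $D_{\mathrm{null}}\leq \hat{D}_{\mathrm{null}}^{\mathrm{median}}$ holds with probability at least $1-\delta$, as claimed.

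I do not anticipate a genuine obstacle here; the argument is entirely routine. The only point that needs a little care is that the success probability $3/4$ produced by Markov is bounded away from $1/2$ by a fixed margin — this is exactly what the median trick requires to boost confidence at only logarithmic cost, and it is why the statement demands $\mathbb{E}\|E\|\leq\epsilon/4$ rather than merely $\mathbb{E}\|E\|\leq\epsilon$, the latter giving no useful per-run guarantee at all.
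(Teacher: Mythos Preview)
Your proposal is correct and matches the paper's approach essentially line for line: the paper also argues that it suffices for $\|E\|\leq\epsilon$ to hold with probability bounded away from $1/2$ by a constant (which it obtains implicitly from the bound on $\mathbb{E}\|E\|$, exactly your Markov step), and then invokes the Chernoff bound on the repeated trials to amplify via the median. You have simply made the Markov step and the median argument explicit where the paper leaves them as one-line remarks.
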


Our main tool in bounding $\mathbb{E}\|E\|$ is through the noncommutative Khintchine inequality \cite[Section 9.8]{pisier2003introduction}. This inequality implies that, as stated in \cite[Eq. (1.2)]{bandeira2021matrix}, for a real symmetric random matrix $M=\sum_j g_j A_j$, where $g_j$'s are i.i.d. standard Gaussian, and $A_j$'s are real symmetric matrices of size $d\times d$, we have 
\begin{equation}
\label{eq:non_comm_khintchine}
    C_1 \sqrt{\Big\|\sum_j A_j^2\Big\|}\leq \mathbb{E}\|M\|\leq C_2 \sqrt{\Big\|\sum_j A_j^2\Big\|\log(d)}.
\end{equation}
Because of the requirement for $M$ to be real and symmetric, instead of directly considering $E$, we need to consider
\begin{equation}
    S_E = 
    \begin{pmatrix}
    0 & E \\
    E^{\top} & 0
    \end{pmatrix}
    =\sigma^-\otimes E + \sigma^+\otimes E^\top.
\end{equation}
Note that $\|S_E\|=\|E\|$. Therefore, we only need to upper bound $\|S_E\|$.

\subsection{The naive approach}
\label{sec:the_naive_approach}

Let us first consider the approach where each entry of $\hat{X}$ is sampled independently. In this scenario, the entries of matrix $E$ are independent. Therefore, we can write
\begin{equation}
\label{eq:E_decomposition_uncorrelated}
    E = \sum_{ij} g_{ij}\sigma_{ij}e_i e_j^\top.
\end{equation}
Here, $g_{ij}$ is a standard Gaussian random variable, and,
\begin{equation}
    \sigma_{ij}^2=\frac{\braket{P_i(t_j)^2}-\braket{P_i(t_j)}^2}{N_s},
\end{equation}
where $N_s$ is how many samples are used for each $\hat{X}_{ij}$. We are assuming that the error is Gaussian, which is reasonable for large $N_s$ due to the central limit theorem.

Correspondingly,
\begin{equation}
    S_E = \sum_{ij} g_{ij}\sigma_{ij}(\sigma^-\otimes e_i e_j^\top+\sigma^+\otimes e_j e_i^\top).
\end{equation}
By \eqref{eq:non_comm_khintchine} we have
\begin{equation}
\label{eq:upper_bound_SE_norm}
\begin{aligned}
    \mathbb{E}[\|S_E\|] &\leq C_2 \sqrt{\Big\|\sum_{ij} \sigma_{ij}^2 (\ket{0}\bra{0}\otimes e_i e_i^\top + \ket{1}\bra{1}\otimes e_j e_j^\top)\Big\|\log(N_P+N_T)} \\
    &\leq C_2 \sqrt{\max\Big\{\max_i\sum_j \sigma_{ij}^2,\max_j\sum_i \sigma_{ij}^2\Big\}\log(N_P+N_T)}.
\end{aligned}
\end{equation}
Because $\|P_i\|\leq 1$, we have
\begin{equation}
    \mathbb{E}[\|E\|]\leq \mathbb{E}[\|S_E\|]\leq \Or\left(\sqrt{\frac{N_P+N_T}{N_s}\log(N_P+N_T)}\right).
\end{equation}

To ensure $\mathbb{E}[\|E\|]\leq \epsilon/4$, we need to choose
\begin{equation}
    N_s = \wt{\Or}((N_P+N_T)\epsilon^{-2}).
\end{equation}
The total number of samples is therefore
\begin{equation}
\label{eq:num_samples_naive_approach}
    N_P\times N_T\times N_s\times \Or(\log(\delta^{-1})) = \wt{\Or}(N_P N_T (N_P+N_T)\epsilon^{-2}\log(\delta^{-1})).
\end{equation}
When considering $N_I$ initial states, we replace all $N_T$ with $N_T N_I$. The above expression then becomes
\begin{equation}
\label{eq:num_samples_naive_approach_multiple_initial_states}
    N_P\times N_T N_I\times N_s\times \Or(\log(\delta^{-1})) = \wt{\Or}(N_P N_T N_I (N_P+N_T N_I)\epsilon^{-2}\log(\delta^{-1})).
\end{equation}

Through Lemma \ref{lem:guarantee_through_E_norm_expectation}, we can compute the cost of learning 
\begin{thm}
\label{thm:naive_approach}
With $\wt{\Or}(N_P N_T N_I (N_P+N_T N_I)\epsilon^{-2}\log(\delta^{-1}))$ samples, we can ensure that $
    D_{\mathrm{null}}\leq \hat{D}_{\mathrm{null}}^{\mathrm{median}},
$
with probability at least $1-\delta$, where $\hat{D}_{\mathrm{null}}^{\mathrm{median}}$ is the median taken over $\Or(\log(\delta^{-1}))$ independent samples of $\hat{D}_{\mathrm{null}}$.
\end{thm}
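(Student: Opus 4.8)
The plan is to combine the deterministic perturbation statement of Lemma~\ref{lem:guarantee_through_E_norm_expectation} with an explicit bound on $\mathbb{E}\|E\|$ obtained from the noncommutative Khintchine inequality \eqref{eq:non_comm_khintchine}. The skeleton is already laid out in Section~\ref{sec:the_naive_approach}: the only thing left is to assemble the pieces and chase the sample-count bookkeeping. First I would invoke Lemma~\ref{lem:guarantee_through_E_norm_expectation}, which reduces the claim to guaranteeing $\mathbb{E}\|E\|\le\epsilon/4$; the median-of-$\Or(\log(\delta^{-1}))$ trick already baked into that lemma is what produces the $\log(\delta^{-1})$ factor and the success probability $1-\delta$.

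Next I would bound $\mathbb{E}\|E\|=\mathbb{E}\|S_E\|$ exactly as in \eqref{eq:E_decomposition_uncorrelated}--\eqref{eq:upper_bound_SE_norm}: since the entries of $E$ are independent mean-zero (approximately Gaussian) with variances $\sigma_{ij}^2=(\braket{P_i(t_j)^2}-\braket{P_i(t_j)}^2)/N_s\le 1/N_s$ (using $\|P_i\|\le 1$, so $P_i^2\preceq\mathbb{I}$), the symmetrized matrix $S_E$ is a Gaussian series $\sum_{ij}g_{ij}\sigma_{ij}A_{ij}$ with $A_{ij}=\sigma^-\otimes e_ie_j^\top+\sigma^+\otimes e_je_i^\top$. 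Computing $\sum_{ij}A_{ij}^2=\sum_{ij}(\ket{0}\!\bra{0}\otimes e_ie_i^\top+\ket{1}\!\bra{1}\otimes e_je_j^\top)$ and weighting by $\sigma_{ij}^2$, the operator norm is at most $\max\{\max_i\sum_j\sigma_{ij}^2,\ \max_j\sum_i\sigma_{ij}^2\}\le\max\{N_T,N_P\}/N_s\le(N_P+N_T)/N_s$. Plugging into \eqref{eq:non_comm_khintchine} with $d=N_P+N_T$ gives $\mathbb{E}\|E\|\le\Or(\sqrt{(N_P+N_T)\log(N_P+N_T)/N_s})$, so setting $N_s=\wt{\Or}((N_P+N_T)\epsilon^{-2})$ achieves $\mathbb{E}\|E\|\le\epsilon/4$.

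Finally I would tally the total sample cost: we estimate $N_P$ Pauli observables at $N_T N_I$ (time, initial-state) pairs, each requiring $N_s$ re-preparations, and the whole procedure is repeated $\Or(\log(\delta^{-1}))$ times for the median, giving $N_P\cdot N_T N_I\cdot N_s\cdot\Or(\log(\delta^{-1}))=\wt{\Or}(N_P N_T N_I(N_P+N_T N_I)\epsilon^{-2}\log(\delta^{-1}))$ as in \eqref{eq:num_samples_naive_approach_multiple_initial_states} (replacing $N_T$ by $N_T N_I$ throughout because the multiple-initial-state construction \eqref{eq:from_X_to_W_multiple_initial_states} only tensors in an identity, which does not increase $\|E\|$). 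Combined with $D_{\mathrm{null}}\le\hat D_{\mathrm{null}}^{\mathrm{median}}$ from Lemma~\ref{lem:guarantee_through_E_norm_expectation}, this is exactly the statement of the theorem.

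The only genuinely delicate point—rather than routine bookkeeping—is the Gaussianity assumption on the entry-wise errors $E_{ij}$: strictly, $\hat X_{ij}$ is an empirical average of bounded random variables, not a Gaussian, so either one argues the central-limit approximation is harmless for the large-$N_s$ regime of interest (as the paper does), or one replaces \eqref{eq:non_comm_khintchine} with a matrix-Bernstein bound for sums of independent bounded random matrices, which yields the same scaling up to logarithmic factors. I would remark on this but not belabor it, since the downstream conclusion is unchanged.
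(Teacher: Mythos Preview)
Your proposal is correct and follows the paper's own argument essentially step for step: invoke Lemma~\ref{lem:guarantee_through_E_norm_expectation} to reduce to $\mathbb{E}\|E\|\le\epsilon/4$, bound $\mathbb{E}\|E\|$ via the symmetrized matrix $S_E$ and the noncommutative Khintchine inequality exactly as in \eqref{eq:E_decomposition_uncorrelated}--\eqref{eq:upper_bound_SE_norm}, then multiply out the sample count and replace $N_T$ by $N_TN_I$ for the multiple-initial-state case. Your closing remark on the Gaussianity assumption mirrors the paper's own caveat about the central limit theorem.
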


Note that in practice, we usually do not choose $N_I$ to be large, but rather choose $N_T N_I=\Or(N)$ where $N$ is the system size.

\subsection{Using classical shadows}
\label{sec:using_classical_shadows}

Using classical shadows to construct $\hat{X}$, we no longer have the simple decomposition in \eqref{eq:E_decomposition_uncorrelated}. At each time $t_j$, the vector consisting of observable expectation values $\hat{X}_{\cdot j}$ is a Gaussian random vector with covariance matrix $\Sigma^j/N_s$ (again, Gaussianity is a result of the central limit theorem),
in which
\begin{equation}
\label{eq:covariance_classical_shadow}
    \Sigma^j_{ii'}=
    \begin{cases}
    3^{\omega(P_i,P_{i'})}\braket{P_i(t_j)P_{i'}(t_j)} - \braket{P_i(t_j)}\braket{P_{i'}(t_j)},&\text{ if }P_i\text{ and }P_{i'}\text{ completely commute}, \\
    -  \braket{P_i(t_j)}\braket{P_{i'}(t_j)},&\text{ otherwise},
    \end{cases}
\end{equation}
where by ``completely commute'' we mean that the two Pauli operators can be simultaneously diagonalized in the same single-qubit Pauli eigenbasis, and $\omega(P_i,P_{i'})$ is the number of qubits on which $P_i$ and $P_{i'}$ overlap.
Let us first perform an eigendecomposition for $\Sigma^j$:
\begin{equation}
    \Sigma^j = \sum_{l}\lambda^j_l v^j_l v^{j \top}_l,
\end{equation}
where $\lambda_l^j\geq 0$ because $\Sigma^j$ is symmetric positive semi-definite.
Then we have
\begin{equation}
    \hat{X}_{\cdot j} = \sum_{l} g_{lj}\sqrt{\lambda_l^j/N_s}v_l^j + X_{\cdot j},
\end{equation}
where ``$=$'' means equal in distribution, and $g_{lj}$'s are i.i.d. standard Gaussian. Then, the error matrix $E$ can be written as
\begin{equation}
    E = \sum_{lj} g_{lj}\sqrt{\lambda_l^j/N_s} v_l^j e_j^\top.
\end{equation}
Through the same analysis as in \eqref{eq:upper_bound_SE_norm}, we have
\begin{equation}
\label{eq:upper_bound_SE_norm_classical_shadow}
\begin{aligned}
    \mathbb{E}[\|S_E\|] 
    &\leq C_2 \sqrt{\max\Big\{\max_i\sum_j \frac{\lambda_l^{j}}{N_s},\max_j\sum_i \frac{\lambda_l^{j}}{N_s}\Big\}\log(N_P+N_T)} \\
    &\leq C_2\sqrt{\max_j\|\Sigma^j\|\frac{N_P+N_T}{N_s}\log(N_P+N_T)}
\end{aligned}
\end{equation}
Therefore
\begin{equation}
    \mathbb{E}[\|E\|]\leq \Or\left(\sqrt{\max_j\|\Sigma^j\|\frac{N_P+N_T}{N_s}\log(N_P+N_T)}\right).
\end{equation}

The next step is then to bound $\max_j\|\Sigma^j\|$. Note that in the worst case, we have $\max_j\|\Sigma^j\|=\Or(N_P)$. This is in fact attainable: we can choose $\rho(t_j)=\ket{\mathrm{GHZ}}\bra{\mathrm{GHZ}}$ where $\ket{\mathrm{GHZ}}=\frac{1}{\sqrt{2}}(\ket{00\cdots 0}+\ket{11\cdots 1})$, and let $P_i=Z_i$ for $i=1,2,\cdots,N_P$. Then, we have $\Sigma^j=2I+\mathbf{1}\mathbf{1}^\top$, thus giving us $\|\Sigma^j\|=\Or(N_P)$.

In this worst case, in order to ensure that $\mathbb{E}[\|E\|]\leq \epsilon/4$, we need
\begin{equation}
    N_s = \Or(N_P(N_P+N_T)\log(N_P+N_T)\epsilon^{-2}).
\end{equation}
The total number of samples needed is
\begin{equation}
    N_T\times N_s\times \Or(\log(\delta^{-1}))  = \wt{\Or}(N_P N_T (N_P+N_T)\epsilon^{-2}\log(\delta^{-1})).
\end{equation}
Note that here, even though we did not need to multiply by $N_P$ as in the naive approach, we still get the same sample complexity scaling as in \eqref{eq:num_samples_naive_approach}.

However, if correlation decays rapidly, classical shadows can offer us an advantage. More specifically, let us assume that the quantum system is on a $D$-dimensional lattice. Furthermore, for all $0\leq t\leq T$,
\begin{equation}
\label{eq:correlation_decay}
    |\braket{P_i(t)P_{i'}(t)} - \braket{P_i(t)}\braket{P_{i'}(t)}|\leq \Or(e^{-d(P_i,P_{i'})/\xi}),
\end{equation}
where $d(P_i,P_{i'})$ is the distance between $P_i$ and $P_{i'}$, and $\xi$ is the correlation length. Because $P_i$'s are supported on at most $k=\Or(1)$ adjacent qubits, the number of $P_{i'}$'s within $r$ distance from $P_i$ grows like $r^D$. Therefore we have
\begin{equation}
    \sum_{i'}|\braket{P_i(t)P_{i'}(t)} - \braket{P_i(t)}\braket{P_{i'}(t)}| \leq \Or(\xi^D),
\end{equation}
for all $i$. 
For $\Sigma^j_{ii'}$, in each row there are only $\Or(1)$ many entries that are different from $\braket{P_i(t)P_{i'}(t)} - \braket{P_i(t)}\braket{P_{i'}(t)}$, as can be seen from \eqref{eq:covariance_classical_shadow}. The absolute value of each entry is upper bounded by $\Or(1)$. Consequently
\begin{equation}
    \sum_{i'}\|\Sigma^j_{ii'}\|\leq \Or(\xi^D).
\end{equation}
Then we have
\begin{equation}
    \|\Sigma^j\|\leq \max_i \sum_{i'}\|\Sigma^j_{ii'}\|\leq \Or(\xi^D),
\end{equation}
which indicates $\|\Sigma^j\|=\Or(1)$ when $\xi,D=\Or(1)$.

In this good scenario, we only need
\begin{equation}
    N_s = \Or((N_P+N_T)\log(N_P+N_T)\epsilon^{-2}).
\end{equation}
The total number of samples needed is
\begin{equation}
    N_T\times N_s\times \Or(\log(\delta^{-1}))  = \wt{\Or}(N_T (N_P+N_T)\epsilon^{-2}\log(\delta^{-1})),
\end{equation}
which is quadratically better than the scaling in \eqref{eq:num_samples_naive_approach} in the $N_P$ dependence.
When we take into account having $N_I$ initial states, the number of samples then becomes
\begin{equation}
    N_T N_I\times N_s\times \Or(\log(\delta^{-1}))  = \wt{\Or}(N_T N_I (N_P+N_T N_I)\epsilon^{-2}\log(\delta^{-1})).
\end{equation}
Again, through Lemma \ref{lem:guarantee_through_E_norm_expectation}, we have
\begin{thm}
\label{thm:classical_shadow_approach}
We assume that the quantum system is defined on a $D$-dimensional lattice, and \eqref{eq:correlation_decay} holds for all $0\leq t\leq T$.
Then with $\wt{\Or}(N_T N_I (N_P+N_T N_I)\epsilon^{-2}\log(\delta^{-1}))$ samples, we can ensure that $
    D_{\mathrm{null}}\leq \hat{D}_{\mathrm{null}}^{\mathrm{median}},
$
with probability at least $1-\delta$, where $\hat{D}_{\mathrm{null}}^{\mathrm{median}}$ is the median taken over $\Or(\log(\delta^{-1}))$ independent samples of $\hat{D}_{\mathrm{null}}$.
\end{thm}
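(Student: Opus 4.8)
The plan is to run the same argument as for the naive estimator (Sec.~\ref{sec:the_naive_approach}) and change only the step that controls the statistical error matrix $E=\hat X-X$, now exploiting that classical shadows reconstruct an entire column $\hat X_{\cdot j}$ at once. By the central limit theorem each column is approximately Gaussian with mean $X_{\cdot j}$ and covariance $\Sigma^j/N_s$ given by \eqref{eq:covariance_classical_shadow}; eigendecomposing $\Sigma^j=\sum_l\lambda^j_l v^j_l v^{j\top}_l$ writes $E=\sum_{lj}g_{lj}\sqrt{\lambda^j_l/N_s}\,v^j_l e_j^\top$ with i.i.d.\ standard Gaussians $g_{lj}$. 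Passing to the Hermitian dilation $S_E=\sigma^-\otimes E+\sigma^+\otimes E^\top$, so that $\|S_E\|=\|E\|$, and applying the noncommutative Khintchine inequality \eqref{eq:non_comm_khintchine} to $S_E$ exactly as in the naive case yields
\begin{equation}
\mathbb{E}\|E\|\le \Or\!\Big(\sqrt{\max_j\|\Sigma^j\|\,\tfrac{N_P+N_T}{N_s}\log(N_P+N_T)}\Big).
\end{equation}
So the whole theorem reduces to establishing $\max_j\|\Sigma^j\|=\Or(1)$ under the lattice and correlation-decay hypotheses.

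To bound $\|\Sigma^j\|$ I would use that $\Sigma^j$ is real symmetric, hence $\|\Sigma^j\|\le\max_i\sum_{i'}|\Sigma^j_{ii'}|$, and split the row sum according to whether the supports of $P_i$ and $P_{i'}$ are disjoint. For disjoint supports, \eqref{eq:covariance_classical_shadow} reduces to the connected correlator $\braket{P_i(t_j)P_{i'}(t_j)}-\braket{P_i(t_j)}\braket{P_{i'}(t_j)}$, which by \eqref{eq:correlation_decay} is $\Or(e^{-d(P_i,P_{i'})/\xi})$; since each $P_i$ is geometrically $k$-local with $k=\Or(1)$ and the lattice is $D$-dimensional, the number of $P_{i'}$ at distance $\approx r$ is $\Or(r^{D-1})$, so this part of the row sum is $\Or\big(\sum_r r^{D-1}e^{-r/\xi}\big)=\Or(\xi^D)$. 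For overlapping supports there are only $\Or(1)$ such $P_{i'}$ (they lie within distance $\Or(k)$ of $P_i$), and each entry is at most $3^{\omega(P_i,P_{i'})}+1\le 3^{k}+1=\Or(1)$ in absolute value. Summing the two contributions gives $\sum_{i'}|\Sigma^j_{ii'}|=\Or(\xi^D)=\Or(1)$ when $\xi,D=\Or(1)$, hence $\max_j\|\Sigma^j\|=\Or(1)$.

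Plugging this back in, $N_s=\wt{\Or}\big((N_P+N_T)\epsilon^{-2}\big)$ forces $\mathbb{E}\|E\|\le\epsilon/4$; multiplying by the number of columns ($N_T$, or $N_T N_I$ when there are $N_I$ initial states, in which case $N_T$ is replaced by $N_T N_I$ throughout) and by the $\Or(\log(\delta^{-1}))$ independent runs needed for the median, and then invoking Lemma~\ref{lem:guarantee_through_E_norm_expectation} to turn $\mathbb{E}\|E\|\le\epsilon/4$ into $D_{\mathrm{null}}\le\hat D_{\mathrm{null}}^{\mathrm{median}}$ with probability at least $1-\delta$, produces the stated count $\wt{\Or}\big(N_T N_I(N_P+N_T N_I)\epsilon^{-2}\log(\delta^{-1})\big)$. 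The main obstacle is the covariance-structure bookkeeping in the second paragraph: one must check carefully that the amplifying $3^{\omega}$ factors and the non-decaying $-\braket{P_i}\braket{P_{i'}}$ off-diagonal entries arise only for the $\Or(1)$ overlapping pairs, while every distant pair contributes a genuine connected correlator to which \eqref{eq:correlation_decay} applies. One should also verify that the Gaussian/CLT approximation can be dispensed with if desired (e.g.\ by running the same reduction through a matrix Bernstein bound instead of Khintchine), and that $\Sigma^j$ being only positive semidefinite rather than definite causes no issue in the eigendecomposition step.
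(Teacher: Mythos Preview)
Your proposal is correct and follows essentially the same route as the paper: the same Gaussian column model with covariance $\Sigma^j/N_s$, the same eigendecomposition and Hermitian dilation $S_E$, the same noncommutative Khintchine bound reducing everything to $\max_j\|\Sigma^j\|$, and the same row-sum argument $\|\Sigma^j\|\le\max_i\sum_{i'}|\Sigma^j_{ii'}|$ with the split into $\Or(1)$ overlapping pairs (bounded trivially) versus disjoint pairs (bounded by summing the exponentially decaying connected correlators over a $D$-dimensional lattice to get $\Or(\xi^D)$). The only cosmetic difference is that you phrase the split as ``disjoint vs.\ overlapping supports'' whereas the paper phrases it as ``entries equal to the connected correlator vs.\ the $\Or(1)$ entries that are not''; these are the same partition, since for disjoint supports one always has complete commutation and $\omega=0$, so $\Sigma^j_{ii'}$ is exactly the connected correlator.
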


\section{Accuracy of the learned conservation laws}
\label{sec:accuracy_of_the_learned_conservation_laws}

In the learning procedure described in Section II in the main text, we obtain a subspace spanned by the singular vectors of the matrix $\hat{W}$. We show in this section that this subspace contains all the conserved quantities approximately.

We use the following result for singular vector perturbation to characterize the accuracy of the conservation laws obtained from our learning procedure:
\begin{lem}
\label{lem:singular_vector_perturbation}
    Suppose we have matrix $A\in\RR^{M\times N}$ and vector $w\in\RR^M$ such that $w^{\top}A=0$. 
    We let $\hat{A}=A+\delta A$. 
    We then write down the singular value decomposition of $\hat{A}$ as $\hat{A} = \sum_{k=1}^r \hat{\sigma}_k \hat{u}_k\hat{v}_k^{\top}$, where $\hat{\sigma}_1\leq\cdots\leq \hat{\sigma}_r$. Let $r'$ be the largest index such that $\hat{\sigma}_{r'}\leq \epsilon$.
    We decompose $w$ through
    \begin{equation}
    \label{eq:orthogonal_decomp_w}
        w = \hat{w} + w_{\perp},
    \end{equation}
    where $\hat{w} = \sum_{k=1}^{r'} \hat{u}_k\hat{u}_k^{\top} w$. Then we have
    \begin{equation}
        \|w_{\perp}\|\leq \frac{\|w^{\top}\delta A\|}{\epsilon}.
    \end{equation}
\end{lem}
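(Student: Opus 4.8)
The plan is to exploit the elementary identity $w^{\top}\hat A = w^{\top}(A+\delta A) = w^{\top}\delta A$, which follows immediately from the hypothesis $w^{\top}A=0$, and then read off both sides of this identity in the singular value decomposition of $\hat A$. Writing $\hat A=\sum_{k=1}^r \hat\sigma_k\hat u_k\hat v_k^{\top}$ and using that the right singular vectors $\{\hat v_k\}$ are orthonormal, I would first record that
\begin{equation}
\|w^{\top}\delta A\|^2 \;=\; \|w^{\top}\hat A\|^2 \;=\; \sum_{k=1}^{r}\hat\sigma_k^2\,(\hat u_k^{\top}w)^2 .
\end{equation}
This already isolates, on the right, a weighted sum of the squared coefficients $(\hat u_k^{\top}w)^2$ of $w$ in the left singular basis, the weights being the squared singular values.

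Next I would relate $\|w_\perp\|$ to the part of this sum coming from the \emph{large} singular values. By construction $\hat w=\sum_{k=1}^{r'}\hat u_k\hat u_k^{\top}w$ is the orthogonal projection of $w$ onto $\operatorname{span}\{\hat u_1,\dots,\hat u_{r'}\}$, so $w_\perp$ is the projection onto the orthogonal complement; completing $\{\hat u_1,\dots,\hat u_r\}$ to an orthonormal basis of $\RR^M$ if $\hat A$ is rank-deficient (the appended directions carry singular value $0\le\epsilon$ and, by the ascending ordering, are absorbed into the index set $\{k:\hat\sigma_k\le\epsilon\}$ and hence into $\hat w$), one gets $\|w_\perp\|^2=\sum_{k:\,\hat\sigma_k>\epsilon}(\hat u_k^{\top}w)^2$. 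Since every such index has $\hat\sigma_k>\epsilon$, i.e. $1<\hat\sigma_k^2/\epsilon^2$, term-by-term comparison with the previous display gives
\begin{equation}
\|w_\perp\|^2 \;\le\; \epsilon^{-2}\sum_{k:\,\hat\sigma_k>\epsilon}\hat\sigma_k^2\,(\hat u_k^{\top}w)^2 \;\le\; \epsilon^{-2}\sum_{k=1}^{r}\hat\sigma_k^2\,(\hat u_k^{\top}w)^2 \;=\; \epsilon^{-2}\,\|w^{\top}\delta A\|^2 ,
\end{equation}
and taking square roots yields the claimed bound $\|w_\perp\|\le \|w^{\top}\delta A\|/\epsilon$.

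I do not expect a genuine obstacle here: once the identity $w^{\top}\hat A=w^{\top}\delta A$ is in place, the estimate is a one-line consequence of orthonormality, and no perturbation-theoretic machinery (Weyl/Mirsky for singular values, or Davis--Kahan/Wedin for subspaces) is needed. The only point requiring a moment's care — which I would flag but treat as bookkeeping — is the degenerate case: one must ensure that the left singular vectors, together with a basis of $\ker\hat A^{\top}$ (all of whose directions have singular value $0\le\epsilon$), span all of $\RR^M$, so that no component of $w$ escapes the decomposition $w=\hat w+w_\perp$. In the intended application $\hat W$ is almost surely full rank, so this subtlety does not even arise, but handling it explicitly makes the lemma hold verbatim.
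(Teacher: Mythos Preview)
Your proposal is correct and follows essentially the same route as the paper: both start from $w^{\top}\hat A = w^{\top}\delta A$ and then bound $\|w_\perp\|$ by observing that on the span of $\{\hat u_k:\hat\sigma_k>\epsilon\}$ the quadratic form $\hat A\hat A^{\top}$ dominates $\epsilon^2 I$. The only cosmetic difference is that the paper writes this via the block decomposition $\|w^{\top}\hat A\|^2\ge w_\perp^{\top}\hat A\hat A^{\top}w_\perp\ge\epsilon^2\|w_\perp\|^2$ (checking that the cross terms $\hat w^{\top}\hat A\hat A^{\top}w_\perp$ vanish), whereas you expand directly in the singular basis as $\sum_k\hat\sigma_k^2(\hat u_k^{\top}w)^2$; your explicit remark on the rank-deficient case is a welcome addition that the paper leaves implicit.
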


\begin{proof}
    By the triangle inequality
    \begin{equation}
    \label{eq:triangle_ineq_wA}
        \|w^{\top}\hat{A}\| \leq \|w^{\top} A\| + \|w^{\top} \delta A\| = \|w^{\top} \delta A\|.
    \end{equation}
    On the other hand
    \begin{equation}
    \begin{aligned}
        \|w^{\top}\hat{A}\|^2 &= \hat{w}^{\top}\hat{A}\hat{A}^{\top}\hat{w} + \hat{w}^{\top}\hat{A}\hat{A}^{\top}w_{\perp} + w_{\perp}^{\top}\hat{A}\hat{A}^{\top}\hat{w} + w_{\perp}^{\top}\hat{A}\hat{A}^{\top}w_{\perp} \\
        &\geq w_{\perp}^{\top}\hat{A}\hat{A}^{\top}w_{\perp}.
    \end{aligned}
    \end{equation}
    Where we have used the fact that $\hat{w}^{\top}\hat{A}\hat{A}^{\top}\hat{w}\geq 0$, and
    \begin{equation}
        \hat{w}^{\top}\hat{A}\hat{A}^{\top}w_{\perp} = \sum_{k=1}^r \hat{\sigma}_k^2\hat{w}^{\top} \hat{u}_k\hat{u}_k^{\top} w_{\perp}=0
    \end{equation}
    because of the orthogonal decomposition \eqref{eq:orthogonal_decomp_w}. Similarly we have $w_{\perp}^{\top}\hat{A}\hat{A}^{\top}\hat{w}=0$. Because $w_{\perp}$ only overlaps with $\hat{u}_k$ for $k\geq r'+1$, we have
    \begin{equation}
        w_{\perp}^{\top}\hat{A}\hat{A}^{\top}w_{\perp} \geq \hat{\sigma}_{r'+1}^2 \|w_{\perp}\|^2\geq \epsilon^2 \|w_{\perp}\|^2.
    \end{equation}
    Therefore
    \begin{equation}
        \|w^{\top}\hat{A}\|^2 \geq  \epsilon^2 \|w_{\perp}\|^2.
    \end{equation}
    Combining the above with \eqref{eq:triangle_ineq_wA} we have
    \begin{equation}
        \|w_{\perp}\|\leq \frac{\|w^{\top}\delta A\|}{\epsilon}.
    \end{equation}
\end{proof}

In the context of our algorithm, we let $A=W$, $\hat{A}=\hat{W}$, and $w=\vec{c}$ ($\|\vec{c}\|=1$) in the above lemma. $\vec{c}$ here corresponds to an exact conserved quantity $O=\sum_i c_i P_i$, $\hat{W}$ is the shifted data matrix, in which we subtract the time average from all entries so that each row sums up to zero, and $W$ is its noiseless limit. 
In practice, $W$ is perturbed to be $\hat{W}$, and the corresponding perturbation is $\delta A = E(I-\mathbf{1}\mathbf{1}^{\top}/N_T)$, where $E$ contains the noise on each entry of the data matrix $X$.
The above result tells us that the subspace we obtain through performing SVD on $\hat{W}$  approximately contains the exact conservation law $O$ if
\begin{equation}
    \|\vec{c}^{\top}E(I-\mathbf{1}\mathbf{1}^{\top}/N_T)\|\ll \epsilon.
\end{equation}
Here, $\epsilon$ is our chosen truncation threshold for singular values. More precisely, the overlap between the vector $\vec{c}$ and the subspace spanned by $\hat{u}_k$, $k=1,2,\cdots,r'$, is at least
\begin{equation}
    \sqrt{1-\frac{\|\vec{c}^{\top}E(I-\mathbf{1}\mathbf{1}^{\top}/N_T)\|^2}{\epsilon}^2}\geq \sqrt{1-\|E\|^2/\epsilon^2}.
\end{equation}
This inequality also holds when we use multiple initial states through the relation \eqref{eq:from_X_to_W_multiple_initial_states}.

From the above analysis, we can see a tension in our choice of the threshold $\epsilon$: decreasing $\epsilon$  helps us better distinguish exactly conserved quantities from the approximate ones, but on the other hand, it increases the precision requirement on our data matrix $\hat{W}$.

\section{Testing conservation laws for a single initial state}
\label{sec:test_conservation_laws_overall}

In this section, we discuss how to test the conserved quantities that we have learned. The testing procedure is briefly outlined in Section~II in the main text. Here, we provide a more detailed description, prove its correctness, and also analyze the cost.

Let $f_i(t)=\Tr[\rho(t)O_i]$, where each $O_i$ is a sum of low-weight Pauli operators, for $i=1,2,\cdots,\chi$. We further assume that, using the notation $f^{(k)}(t)$ to denote the $k$th derivative of $f(t)$,
\begin{equation}
    |f_i^{(k)}(t)|\leq \mathcal{C}\Gamma^k k!.
\end{equation}
with constants $\mathcal{C}$ and $\Gamma$. We note that this is a very reasonable assumption to make. For time evolution under the von Neumann equation, this assumption holds with $\Gamma=\Or(1)$  for geometrically local Hamiltonians and Hamiltonians with certain fast-decaying long-range interaction. For details see Appendix~\ref{sec:time_derivative_bounds}. In general, we can always choose $\Gamma = \|H\|$. For the Lindblad master equation, a similar result can also be obtained.

\subsection{Expectation value interpolation for multiple observables}
\label{sec:exp_val_interpolation}

In this section, we find functions $\hat{p}_i(t)$, which are piecewise polynomials, such that
\begin{equation}
\label{eq:uniform_approx}
    |f_i(t)-\hat{p}_i(t)|\leq \epsilon
\end{equation}
with probability at least $1-\delta$ for each $t\in[0,T]$.

\subsubsection{Short-time interpolation}
\label{sec:short_time_interpolation}

We first propose a method for the case where $\Gamma T\leq 1$. In this case, $f(t)$ can be well-approximated by a polynomial through Taylor expansion. We have
\begin{equation}
    \begin{aligned}
        f_i(t) &= \sum_{k=0}^{\infty} \frac{f^{(k)}_i(T/2)}{k!}(t-T/2)^k \\
        &= \sum_{k=0}^K \frac{f^{(k)}_i(T/2)}{k!}(t-T/2)^k + \Or\left(\frac{\mathcal{C}}{2^K}\right),
    \end{aligned}
\end{equation}
where in deriving the second line, we have used the fact that
\begin{equation}
    \left|\frac{f^{(k)}_i(T/2)}{k!}(t-T/2)^k\right|\leq \mathcal{C}\left(\frac{\Gamma T}{2}\right)^k\leq \frac{\mathcal{C}}{2^k}. 
\end{equation}
We then denote
\begin{equation}
    p^{K}_i(t) = \sum_{k=0}^K \frac{f_i^{(k)}(T/2)}{k!}(t-T/2)^k.
\end{equation}
This is a degree $K$ polynomial satisfying
\begin{equation}
    \left|f_i(t)-p^{K}_i(t)\right|\leq \frac{\mathcal{C}}{2^K},
\end{equation}
for $t\in[0,T]$.

Following \cite{FrancaEtAl2022efficient}, which in turn relies on \cite{KaneEtAl2017robust}, we generate independent and identically distributed samples $t_1,t_2,\cdots, t_m$ from the Chebyshev distribution on $[0,T]$, specified by the probability density function $(1/\pi)(t(T-t))^{-1/2}$. We then generate $N_s$ classical shadows for each $t_j$, $j=1,2,\cdots,m$. Therefore we are able to generate estimates $y_{ij}$ such that
\begin{equation}
    \mathbb{E}[y_{ij}] = f_i(t_j),\quad \operatorname{var}[y_{ij}]=\Or\left(\frac{\|O_i\|_{\mathrm{shadow}}^2}{N_s}\right).
\end{equation}
Therefore, with an appropriately chosen constant factor, we have
\begin{equation}
    |y_{ij}-f_i(t_j)|\leq \Or\left(\frac{\|O_i\|_{\mathrm{shadow}}^2}{N_s}\right),
\end{equation}
with probability at least $2/3$. This indicates that
\begin{equation}
\label{eq:single_point_approx_err}
    |y_{ij}-p^{K}_i(t_j)|\leq \Or\left(\frac{\|O_i\|_{\mathrm{shadow}}^2}{N_s}\right) + \frac{\mathcal{C}}{2^K},
\end{equation}
with probability at least $2/3$. By the Chernoff-Hoeffding theorem, the above inequality holds for a majority of $j=1,2,\cdots,m$ with probability at least $1-e^{-\Omega(m)}$.

Using the robust polynomial interpolation method proposed in \cite{KaneEtAl2017robust}, and as stated in \cite[Theorem E.1]{FrancaEtAl2022efficient}, we can construct polynomials $\hat{p}_i(t)$ from $\{(t_j,y_{ij})\}$ such that
\begin{equation}
    |p^K_i(t)-\hat{p}_i(t)|\leq \Or\left(\frac{\|O_i\|_{\mathrm{shadow}}^2}{N_s}+ \frac{\mathcal{C}}{2^K}\right) 
\end{equation}
for all $t\in[0,T]$ with probability at least $1-\delta'$, by choosing
\begin{equation}
    \label{eq:m_condition}
    m = \Or\left(K\log(K\delta'^{-1})\right).
\end{equation}
This then leads to
\begin{equation}
\label{eq:uniform_approx_err}
    |f_i(t)-\hat{p}_i(t)|\leq \Or\left(\frac{\|O_i\|_{\mathrm{shadow}}^2}{N_s}+ \frac{\mathcal{C}}{2^K}\right) 
\end{equation}
for all $t\in[0,T]$. Therefore $\hat{p}_i(t)$ is a good uniform approximation of $f_i(t)$ for each $i$.

The above method can fail in two scenarios: either the error bound \eqref{eq:single_point_approx_err} fails to hold for a majority of times, or the sampled times fail to correctly capture the profile of the function. The former failure scenario has its probability bounded by $e^{-\Omega(m)}$ by the Chernoff-Hoeffding theorem, and the latter by $\delta'$ from the robust polynomial interpolation procedure. As a result, if we want to keep the total failure probability for a single $f_i(t)$ to be at most $\delta$, then we  only need
\begin{equation}
    e^{-\Omega(m)}+\delta'\leq \delta.
\end{equation}
To this end, and taking into account \eqref{eq:m_condition}, it suffices to choose
\begin{equation}
\label{eq:m_condition_intermediate}
    m = \Or\left(K\log(K\delta'^{-1})\right),\quad \delta'=\delta/2.
\end{equation}

We want the uniform approximation error in \eqref{eq:uniform_approx_err} to be upper bounded by $\epsilon$. Therefore we can choose
\begin{equation}
\label{eq:N_s_K_condition}
    N_s = \Or\left(\frac{\max_i\|O_i\|^2_{\mathrm{shadow}}}{\epsilon^2}\right),\quad K=\Or(\log(\mathcal{C}\epsilon^{-1})).
\end{equation}

Combining the above analysis, in particular \eqref{eq:m_condition_intermediate} and \eqref{eq:N_s_K_condition}, the total number of classical shadows we need is
\begin{equation}
    \label{eq:total_number_classical_shadows}
    N_s\times m = \Or\left(\frac{\max_i\|O_i\|^2_{\mathrm{shadow}}}{\epsilon^2}\log(\mathcal{C}\epsilon^{-1})\log\left(\frac{\log(\mathcal{C}\epsilon^{-1})}{\delta}\right)\right).
\end{equation}

We summarize the above analysis into the following lemma
\begin{lem}
    \label{lem:short_time_interpolation}
    Let $f_i(t)=\Tr[\rho(t)O_i]$, for $i=1,2,\cdots,\chi$. We further assume that $|f_i^{(k)}(t)|\leq \mathcal{C}\Gamma^k k!$. Then for $T\leq 1/\Gamma$ we can construct polynomials $\hat{p}_i(t)$, with degree up to $\Or(\log(\mathcal{C}\epsilon^{-1}))$, for $i=1,2,\cdots,\chi$, such that
    \begin{equation}
        \Pr\left[\max_{t\in[0,T]}|\hat{p}_i(t)-f_i(t)|>\epsilon\right]<\delta,
    \end{equation}
    using 
    \begin{equation}
        \Or\left(\frac{\max_i\|O_i\|^2_{\mathrm{shadow}}}{\epsilon^2}\log(\mathcal{C}\epsilon^{-1})\log\left(\frac{\log(\mathcal{C}\epsilon^{-1})}{\delta}\right)\right)
    \end{equation}
    classical shadows of the time evolved state $\rho(t)$.
\end{lem}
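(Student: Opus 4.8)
The plan is to reduce the lemma to the robust polynomial interpolation machinery of \cite{KaneEtAl2017robust} (in the form restated as \cite[Theorem E.1]{FrancaEtAl2022efficient}) after first replacing each $f_i$ by a low-degree polynomial surrogate. The key enabling fact is the hypothesis $\Gamma T \le 1$: Taylor-expanding $f_i$ about the midpoint $T/2$ and truncating at degree $K$, the derivative bound $|f_i^{(k)}(t)| \le \mathcal{C}\Gamma^k k!$ makes the remainder at most $\sum_{k>K}\mathcal{C}(\Gamma T/2)^k = \Or(\mathcal{C}\,2^{-K})$, uniformly on $[0,T]$. Call this truncation $p_i^K$; it has degree $K$ and $\|f_i - p_i^K\|_{\infty,[0,T]} = \Or(\mathcal{C}\,2^{-K})$. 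Hence it suffices to reconstruct, for each $i$, a degree-$K$ polynomial $\hat p_i$ that is uniformly $\Or(\epsilon)$-close to $p_i^K$, and then to set $K = \Or(\log(\mathcal{C}\epsilon^{-1}))$ so that the Taylor remainder is itself $\Or(\epsilon)$; the triangle inequality then closes the gap.

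To produce the data for the interpolation step, I would draw $m$ i.i.d.\ nodes $t_1,\dots,t_m$ from the Chebyshev measure on $[0,T]$ and, at each node, form from $N_s$ classical shadows an unbiased estimator $y_{ij}$ of $f_i(t_j)$ with $\operatorname{var}[y_{ij}] = \Or(\|O_i\|_{\mathrm{shadow}}^2/N_s)$. A Chebyshev-type concentration bound gives $|y_{ij}-p_i^K(t_j)| \le \eta$ with probability at least $2/3$ at each node, where $\eta = \Or(\|O_i\|_{\mathrm{shadow}}/\sqrt{N_s}) + \Or(\mathcal{C}\,2^{-K})$ combines the statistical error with the Taylor remainder; a Chernoff--Hoeffding bound over the $m$ nodes then makes this hold for a $(1-o(1))$-fraction of them except with probability $e^{-\Omega(m)}$. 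This is exactly the input required by the robust interpolation theorem --- noisy evaluations of a bounded-degree polynomial, a small minority of which may be corrupted arbitrarily --- which, with $m = \Or(K\log(K/\delta'))$ Chebyshev nodes, outputs a degree-$K$ polynomial $\hat p_i$ with $\|\hat p_i - p_i^K\|_{\infty,[0,T]} = \Or(\eta)$ except with probability $\delta'$.

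It then remains to fix the parameters and union-bound the failure events. Choosing $N_s = \Or(\max_i\|O_i\|_{\mathrm{shadow}}^2/\epsilon^2)$ and $K = \Or(\log(\mathcal{C}\epsilon^{-1}))$ forces $\eta = \Or(\epsilon)$, hence $\|f_i - \hat p_i\|_{\infty,[0,T]} \le \epsilon$ up to an absolute constant. There are two independent ways the construction can fail for a given $i$: the empirical ``majority of nodes is good'' event can fail, with probability $\le e^{-\Omega(m)}$, or the interpolation step can fail, with probability $\le \delta'$; taking $\delta' = \delta/2$ and $m = \Or(K\log(K\delta^{-1}))$ large enough that $e^{-\Omega(m)} \le \delta/2$ caps the total at $\delta$. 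The shadow budget is $N_s\cdot m$, matching the claimed $\Or(\epsilon^{-2}\max_i\|O_i\|_{\mathrm{shadow}}^2\log(\mathcal{C}\epsilon^{-1})\log(\log(\mathcal{C}\epsilon^{-1})/\delta))$; note that a single batch of $N_s$ snapshots at a node estimates all $\chi$ observables at once, so this budget is independent of $\chi$ (the $\chi$-dependence enters only later, through union bounds in the testing stage).

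The step I expect to demand the most care is interfacing with the robust interpolation theorem: it is phrased for noisy samples of an \emph{exact} low-degree polynomial, so the Taylor remainder must be folded into the corruption budget $\eta$ rather than tracked separately, and the two sources of randomness --- the random node locations feeding the interpolation guarantee and the shadow noise feeding the per-node concentration --- must be kept disentangled in the union bound. Checking that the Chebyshev sampling density is precisely the one for which \cite[Theorem E.1]{FrancaEtAl2022efficient} yields the $m = \Or(K\log(K/\delta'))$ sample bound, and that the returned $\hat p_i$ genuinely has degree at most $K$, are the remaining points to verify rather than assume.
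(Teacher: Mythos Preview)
Your proposal is correct and follows essentially the same approach as the paper: Taylor-truncate each $f_i$ about $T/2$ to a degree-$K$ polynomial using the derivative bound and $\Gamma T\le 1$, sample Chebyshev nodes, estimate expectation values via classical shadows, and feed the noisy data into the robust polynomial interpolation of \cite{KaneEtAl2017robust}/\cite[Theorem~E.1]{FrancaEtAl2022efficient}, with the same parameter choices $K=\Or(\log(\mathcal{C}\epsilon^{-1}))$, $N_s=\Or(\max_i\|O_i\|_{\mathrm{shadow}}^2/\epsilon^2)$, $m=\Or(K\log(K/\delta'))$, and the same union bound over the two failure modes. Your per-node error bound $\Or(\|O_i\|_{\mathrm{shadow}}/\sqrt{N_s})$ is in fact the correct scaling (the paper's $\Or(\|O_i\|_{\mathrm{shadow}}^2/N_s)$ appears to be a typo that does not affect the final sample count).
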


\subsubsection{Long-time interpolation}
\label{sec:long_time_interpolation}

We then consider the case where $T$ is not necessarily upper bounded by $1/\Gamma$. In this case we can simply partition the interval $[0,T]$ into segments each of length at most $1/\Gamma$, and there are therefore $\Gamma T$ such segments. We then use the algorithm described in Appendix~\ref{sec:short_time_interpolation} to generate a polynomial to approximate each $f_i(t)$ on each of the $\Gamma T$ segments. Piecing these polynomials together we have a piecewise polynomial approximation $\hat{g}_i(t)$ that approximates $f_i(t)$ for all $t\in[0,T]$. The success probability of this procedure can be obtained via a union bound. We therefore arrive a the following theorem from Lemma~\ref{lem:short_time_interpolation}:
\begin{thm}
    \label{thm:long_time_interpolation}
    Let $f_i(t)=\Tr[\rho(t)O_i]$, for $i=1,2,\cdots,\chi$. We further assume that $|f_i^{(k)}(t)|\leq \mathcal{C}\Gamma^k k!$. Then for $T>0$ we can construct piecewise-polynomial functions $\hat{g}_i(t)$, with degrees up to $\Or(\log(\mathcal{C}\epsilon^{-1}))$ on at most $\Gamma T$ segments, for $i=1,2,\cdots,\chi$, such that
    \begin{equation}
        \Pr\left[\max_{t\in[0,T]}|\hat{g}_i(t)-f_i(t)|>\epsilon\right]<\delta,
    \end{equation}
    using 
    \begin{equation}
        \Or\left(\frac{\Gamma T\max_i\|O_i\|^2_{\mathrm{shadow}}}{\epsilon^2}\log(\mathcal{C}\epsilon^{-1})\log\left(\frac{\Gamma T\log(\mathcal{C}\epsilon^{-1})}{\delta}\right)\right)
    \end{equation}
    classical shadows of the time evolved state $\rho(t)$.
\end{thm}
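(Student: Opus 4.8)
The plan is to reduce the long-time claim to the short-time result of Lemma~\ref{lem:short_time_interpolation} by a divide-and-conquer over the time interval, followed by a union bound over the pieces.

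First I would partition $[0,T]$ into $L=\lceil\Gamma T\rceil$ consecutive subintervals $I_\ell=[a_{\ell-1},a_\ell]$, $\ell=1,\dots,L$, each of length $|I_\ell|\le 1/\Gamma$. On each $I_\ell$ I rerun the data-collection step of Section~\ref{section2}: sample times from the Chebyshev distribution rescaled to $I_\ell$ and take $N_s$ classical shadows of $\rho(t)$ at each sampled time. Setting $\tilde f_{i,\ell}(s)=f_i(a_{\ell-1}+s)$ for $s\in[0,|I_\ell|]$, the hypothesis $|f_i^{(k)}(t)|\le\mathcal C\Gamma^k k!$ passes verbatim to $|\tilde f_{i,\ell}^{(k)}(s)|\le\mathcal C\Gamma^k k!$, since differentiation commutes with the translation, and $|I_\ell|\le 1/\Gamma$; hence Lemma~\ref{lem:short_time_interpolation} applies on $I_\ell$ with $T\mapsto|I_\ell|$ and failure probability $\delta/L$. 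This yields, for each $i$, a polynomial $\hat p_{i,\ell}$ of degree $\Or(\log(\mathcal C\epsilon^{-1}))$ with $\max_{t\in I_\ell}|\hat p_{i,\ell}(t)-f_i(t)|\le\epsilon$ except with probability $\delta/L$, using $\Or\!\big(\epsilon^{-2}\max_i\|O_i\|_{\mathrm{shadow}}^2\log(\mathcal C\epsilon^{-1})\log(L\log(\mathcal C\epsilon^{-1})/\delta)\big)$ classical shadows on that subinterval, and the same shadows serve all $i$ at once so no factor $\chi$ appears.

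Next I would glue the pieces: set $\hat g_i(t)=\hat p_{i,\ell}(t)$ for $t\in I_\ell$ (resolving shared endpoints arbitrarily). This is a piecewise polynomial on $L=\Or(\Gamma T)$ segments of degree $\Or(\log(\mathcal C\epsilon^{-1}))$. By a union bound over the $L$ subintervals, $\max_{t\in[0,T]}|\hat g_i(t)-f_i(t)|\le\epsilon$ fails with probability at most $L\cdot(\delta/L)=\delta$. Summing the per-subinterval shadow counts over the $L$ segments gives a total of $\Or\!\big(\Gamma T\,\epsilon^{-2}\max_i\|O_i\|_{\mathrm{shadow}}^2\log(\mathcal C\epsilon^{-1})\log(\Gamma T\log(\mathcal C\epsilon^{-1})/\delta)\big)$ classical shadows, which is the stated bound.

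There is no genuinely hard step: all the analytic content sits in Lemma~\ref{lem:short_time_interpolation} and, behind it, the Taylor-truncation estimate of Appendix~\ref{sec:short_time_interpolation} together with the robust polynomial interpolation of \cite{KaneEtAl2017robust,FrancaEtAl2022efficient}. The only points that need care are (i) confirming that the derivative hypothesis is invariant under the affine reparametrization of each segment, so that each per-segment invocation of the lemma is legitimate; (ii) tracking the failure budget — allocating $\delta/L$ per segment is exactly what produces the extra $\log(\Gamma T)$ inside the outer logarithm of the sample complexity; and (iii) observing that if $\Gamma T\le 1$ then $L=1$ and the theorem is literally Lemma~\ref{lem:short_time_interpolation}.
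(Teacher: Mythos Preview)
Your proposal is correct and follows essentially the same approach as the paper: partition $[0,T]$ into at most $\Gamma T$ subintervals of length $\le 1/\Gamma$, invoke Lemma~\ref{lem:short_time_interpolation} on each piece with failure probability $\delta/L$, glue the polynomials, and apply a union bound. The paper's proof is in fact even more terse than your sketch, so your extra care in checking the derivative hypothesis under translation and tracking the $\delta/L$ budget only makes the argument cleaner.
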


\subsection{Testing conservation laws}
\label{sec:test_conservation_laws}

For each $i=1,2,\cdots,\chi$, we define the time average $\bar{f}_i=(1/T)\int_0^T f_i(t) \dd t$. For each $i$, we want to test which of the two following hypotheses is true:
\begin{enumerate}
    \item[] \textbf{(Hypothesis 1)} For all $t\in[0,T]$, $f_i(t)=\bar{f}_i$;
    \item[] \textbf{(Hypothesis 2)} There exists $t^*\in[0,T]$ such that $|f_i(t^*)-\bar{f}_i|\geq \epsilon$.
\end{enumerate}

Unlike the usual statistical hypothesis testing situation, the two hypotheses we consider above are treated on an equal footing, and therefore we do not need to distinguish between the null hypothesis and the alternative hypothesis.  This is possible because we are considering a promise decision problem. 

With the piecewise polynomial approximations $\hat{g}_i(t)$ we have, we can easily distinguish the two cases: an $\epsilon/8$-uniform approximation ensures that 
\begin{equation}
    |\hat{g}_i(t)-f_i(t)|\leq \frac{\epsilon}{8},\quad \left|\frac{1}{T}\int_0^T \hat{g}_i(t) \dd t-\bar{f}_i\right|\leq \frac{\epsilon}{8}.
\end{equation}
Therefore if Hypothesis 1 is true, then we have
\begin{equation}
    \left|\frac{1}{T}\int_0^T \hat{g}_i(t) \dd t-\hat{g}_i(t)\right|\leq \frac{\epsilon}{8}+\frac{\epsilon}{8}=\frac{\epsilon}{4}.
\end{equation}
And if Hypothesis 2 is true, then we have
\begin{equation}
\begin{aligned}
    \left|\frac{1}{T}\int_0^T \hat{g}_i(t) \dd t-\hat{g}_i(t^*)\right| \geq |\bar{f}_i-f_i(t^*)|-\left|\frac{1}{T}\int_0^T \hat{g}_i(t) \dd t-\bar{f}_i\right| - \left|f_i(t^*)-\hat{g}_i(t^*)\right|\geq \frac{3\epsilon}{4}.
\end{aligned}
\end{equation}
Therefore the two hypotheses can be distinguished by the statistic $\max_{t\in[0,T]}\left|\frac{1}{T}\int_0^T \hat{g}_i(t) \dd t-\hat{g}_i(t)\right|\leq \epsilon/4$ or $\geq 3\epsilon/4$.

We can successfully distinguish between the two cases if we get an $\epsilon$-uniform approximation for $f_i(t)$. Therefore we can use Theorem~\ref{thm:long_time_interpolation} to determine the cost of keeping the error probability below $\delta$ (which means both the Type-I and Type-II error probabilities are below $\delta$). We therefore have the following theorem
\begin{thm}
    \label{thm:testing_conservation_laws_sm}
    Let $f_i(t)=\Tr[\rho(t)O_i]$, for $i=1,2,\cdots,\chi$. We further assume that $|f_i^{(k)}(t)|\leq \mathcal{C}\Gamma^k k!$. Then for $T>0$ we can distinguish between Hypotheses 1 and 2 for each $i$ correctly using 
    \begin{equation}
        \Or\left(\frac{\Gamma T\max_i\|O_i\|^2_{\mathrm{shadow}}}{\epsilon^2}\log(\mathcal{C}\epsilon^{-1})\log\left(\frac{\Gamma T\log(\mathcal{C}\epsilon^{-1})}{\delta}\right)\right)
    \end{equation}
    classical shadows of the time evolved state $\rho(t)$.
\end{thm}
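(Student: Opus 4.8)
The plan is to derive Theorem~\ref{thm:testing_conservation_laws_sm} directly from the uniform interpolation guarantee of Theorem~\ref{thm:long_time_interpolation} together with two triangle inequalities, exactly as sketched above the statement. First I would apply Theorem~\ref{thm:long_time_interpolation} with target accuracy $\epsilon/8$ (in place of $\epsilon$) and failure probability $\delta$, producing for each $i=1,\dots,\chi$ a piecewise-polynomial function $\hat{g}_i(t)$ of degree $\Or(\log(\mathcal{C}\epsilon^{-1}))$ on $\Or(\Gamma T)$ segments with $\max_{t\in[0,T]}|\hat{g}_i(t)-f_i(t)|\le \epsilon/8$, simultaneously for all $i$, with probability at least $1-\delta$. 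Because the sample count in Theorem~\ref{thm:long_time_interpolation} scales with the accuracy only through $\epsilon^{-2}\log(\mathcal{C}\epsilon^{-1})$ and with $\delta$ only logarithmically, replacing $\epsilon$ by $\epsilon/8$ leaves the asymptotic number of classical shadows unchanged, namely $\Or\bigl(\Gamma T\epsilon^{-2}\max_i\|O_i\|_{\mathrm{shadow}}^2\log(\mathcal{C}\epsilon^{-1})\log(\Gamma T\log(\mathcal{C}\epsilon^{-1})/\delta)\bigr)$.

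Next I would introduce the test statistic $s_i=\max_{t\in[0,T]}\bigl|\tfrac{1}{T}\int_0^T \hat{g}_i(\tau)\,\dd\tau-\hat{g}_i(t)\bigr|$, which is computed exactly from $\hat{g}_i$ since integrating and maximizing a degree-$\Or(\log(\mathcal{C}\epsilon^{-1}))$ piecewise polynomial over $\Or(\Gamma T)$ intervals is elementary. Conditioned on the uniform bound, one has $\bigl|\tfrac1T\int_0^T\hat{g}_i-\bar f_i\bigr|\le\epsilon/8$; hence under Hypothesis~1 ($f_i\equiv\bar f_i$) we get $s_i\le\epsilon/8+\epsilon/8=\epsilon/4$, while under Hypothesis~2 (some $t^*$ with $|f_i(t^*)-\bar f_i|\ge\epsilon$) the triangle inequality gives $s_i\ge\epsilon-\epsilon/8-\epsilon/8=3\epsilon/4$. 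Declaring Hypothesis~1 when $s_i\le\epsilon/2$ and Hypothesis~2 otherwise then decides the promise problem correctly for every $i$ whenever the interpolation succeeded, i.e.\ with probability at least $1-\delta$; there is no additional union bound to pay because Theorem~\ref{thm:long_time_interpolation} already handles all $\chi$ observables and all segments jointly through the shared classical-shadow data.

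I do not expect a genuine obstacle: the substantive analysis has already been carried out in establishing the robust polynomial interpolation bound (Theorem~\ref{thm:long_time_interpolation}, building on Lemma~\ref{lem:short_time_interpolation}) and in justifying the derivative hypothesis $|f_i^{(k)}(t)|\le\mathcal{C}\Gamma^k k!$ (Appendix~\ref{sec:time_derivative_bounds}). The only point requiring a little care is the bookkeeping of the factor-of-$8$ slack, arranged so that the acceptance region $\{s_i\le\epsilon/2\}$ lies strictly between the two guaranteed values $\epsilon/4$ and $3\epsilon/4$; this gives a constant separation margin, so the sample complexity does not acquire any extra dependence on that margin, and one simply reads off the stated bound from Theorem~\ref{thm:long_time_interpolation} after the substitution $\epsilon\mapsto\epsilon/8$.
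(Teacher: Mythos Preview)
Your proposal is correct and follows essentially the same approach as the paper: apply Theorem~\ref{thm:long_time_interpolation} with accuracy $\epsilon/8$, use the test statistic $\max_{t\in[0,T]}\bigl|\tfrac{1}{T}\int_0^T\hat g_i-\hat g_i(t)\bigr|$, and separate the two hypotheses via the same $\epsilon/4$ versus $3\epsilon/4$ triangle-inequality computation. One small remark: Theorem~\ref{thm:long_time_interpolation} as stated gives the $1-\delta$ guarantee per observable $i$, not simultaneously for all $\chi$, but since Theorem~\ref{thm:testing_conservation_laws_sm} is itself phrased ``for each $i$'' this does not affect the argument.
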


\section{Testing conservation laws for an ensemble of initial states}
\label{sec:ensemble_initial_states}

In this section, we upper bound the generalization error given in Eq.~(9) in the main text, which we restate here
\begin{equation}
    \Big|\mathbb{E}_{\rho\sim\mathcal{D}}d(O_i,\rho)-\frac{1}{N_I}\sum_{k=1}^{N_I}d(O_i,\rho_k)\Big|.
\end{equation}
Note that $|d(O,\rho)|\leq 2\|O\|$ by the definition of $d(O,\rho)$ given in Eq.~(7). As a result, by Hoeffding's inequality, the above generalization error is at most $\epsilon'$ with probability at least 
\begin{equation}
    1-2\exp\left(-\frac{N_I\epsilon'^2}{2\|O\|^2}\right).
\end{equation}
This means that in order to make the generalization error to be at most $\epsilon$ with probability at least $1-\delta'$, we need
\begin{equation}
    N_I = \Or(\epsilon^{-2}\log(\delta'^{-1})\|O\|^2).
\end{equation}

\section{Query complexity lower bound for testing conservation laws for all initial states}
\label{sec:query_complexity_lower_bound}

In this section, we consider the setting where the Hamiltonian $e^{-iHt}$ on $N$ qubits is provided through an oracle, and we want to show that testing whether $H$ commutes with a simple observable can require $\Omega(2^{N/2})$ queries to the oracle in the worst case.

Our result is based on the lower bound for computing the OR function. In this setting, an $2^N$-bit string $\boldsymbol{x}=(\boldsymbol{x}_0,\boldsymbol{x}_1,\cdots,\boldsymbol{x}_{2^N-1})$ is provided through an oracle $U$ that satisfies
\begin{equation}
    \label{eq:oracle}
    U\ket{n} = (-1)^{\boldsymbol{x}_n}\ket{n},
\end{equation}
for $n=0,1,\cdots,2^N-1$.
In order to compute $\operatorname{OR}(\boldsymbol{x})$ \footnote{Here, $\operatorname{OR}(\boldsymbol{x})$ is defined as $\operatorname{OR}(\boldsymbol{x})=0$ if all $\boldsymbol{x}_i$ are $0$ and $\operatorname{OR}(\boldsymbol{x})=1$ otherwise.}, it is known that at least $\Omega(2^{N/2})$ queries to $U$ are needed \cite{beals2001quantum}. This query complexity lower bound still holds even if we constrain $\boldsymbol{x}$ to contain at most a single $1$, which corresponds to the partial function setting discussed in the comment after Theorem 4.13 in \cite{beals2001quantum}.

Now we choose  our Hamiltonian $H=U$ to be this oracle unitary  $U$, which is incidentally also Hermitian. We further restrict to the case where $\boldsymbol{x}$ contains at most a single $1$. We  first implement $e^{-iHt}$ using the oracle $U$ itself. Because the eigenvalues of $H=U$ are $\pm 1$, we can implement $e^{-iHt}$ using only two queries of $U$ for arbitrary $t$ through phase kickback.

We now assume that an algorithm can do the following: given access to $e^{-iHt}$ (acting on $N$ qubits) for arbitrarily chosen $t$, it can distinguish the following two cases
\begin{equation}
\label{eq:two_cases_commute}
    [H,X_1]=0,\text{ or }\|[H,X_1]\|\geq 1,
\end{equation}
where $X_1$ is the Pauli-X operator on the first qubit. If the algorithm can accomplish the task with $Q$ queries to $e^{-iHt}$, we  next argue that it can compute $\operatorname{OR}(\boldsymbol{x})$ using $2Q$ queries to $U$ (because of the implementation of $e^{-iHt}$ discussed in the previous paragraph), and in this way show that $Q=\Omega(2^{N/2})$.

Our argument goes as follows: if $\operatorname{OR}(\boldsymbol{x})=1$, then there exists $0\leq n^*\leq N-1$ such that $\boldsymbol{x}_{n^*}=1$, whereas $\boldsymbol{x}_{n}=0$ for all other $n$ because of our restriction of the domain of the OR function. Therefore
\begin{equation}
    H=U = I - 2\ket{n^*}\bra{n^*}.
\end{equation}
One can then compute
\begin{equation}
    \|[H,X_1]\|=2\|[\ket{n^*}\bra{n^*},X_1]\|=2,
\end{equation}
where we have used the fact that $X_1\ket{n^*}$ is orthogonal to $\ket{n^*}$. On the other hand, if $\operatorname{OR}(\boldsymbol{x})=0$, then $H=U=I$, and as a result $[H,X_1]=0$. Therefore, as long as we can distinguish the two cases in \eqref{eq:two_cases_commute}, we are able to evaluate $\operatorname{OR}(\boldsymbol{x})$. The above argument therefore leads us to the following theorem:
\begin{thm}
    \label{thm:lower_bound}
    Given access to $e^{-iHt}$ (acting on $N$ qubits) for arbitrarily chosen $t$ as a black-box oracle, any algorithm that can distinguish between $[H,X_1]=0$ and $\|[H,X_1]\|\geq 1$ with probability at least $2/3$, where $X_1$ is the Pauli-X operator on the first qubit, takes at least $\Omega(2^{N/2})$ queries to $e^{-iHt}$ in the worst case.
\end{thm}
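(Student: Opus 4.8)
The plan is to reduce the unstructured search problem---equivalently, computing the OR function---to the commutator-testing problem, and then invoke the bounded-error quantum query lower bound of Beals et al.~\cite{beals2001quantum}. Concretely, suppose we are handed a phase oracle $U$ on $N$ qubits that encodes a bit-string $\boldsymbol{x}=(\boldsymbol{x}_0,\dots,\boldsymbol{x}_{2^N-1})$ of Hamming weight at most one through $U\ket{n}=(-1)^{\boldsymbol{x}_n}\ket{n}$. Since every $\boldsymbol{x}_n\in\{0,1\}$, the operator $H:=U$ is Hermitian with spectrum in $\{+1,-1\}$, hence a legitimate Hamiltonian; taking $O=X_1$ to be Pauli-$X$ on the first qubit puts us exactly in the setting of the theorem.

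First I would show that $e^{-iHt}$ can be implemented for any $t$ using only a constant number of queries to $U$. Because $H^2=I$, one has the exact identity $e^{-iHt}=\cos(t)\,I-i\sin(t)\,H$, i.e.\ a linear combination of the identity and the reflection $H$ with phases $e^{\mp it}$ on the $\pm1$ eigenspaces; using phase kickback (or an ancilla/LCU realization of this two-term combination), two queries to $U$ suffice to realize $e^{-iHt}$ exactly. Consequently, any algorithm that makes $Q$ queries to $e^{-iHt}$ can be simulated with $2Q$ queries to $U$ with no loss in success probability.

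Next I would translate the promise into the structure of $\boldsymbol{x}$. If $\boldsymbol{x}=0$ then $H=I$ and $[H,X_1]=0$. If instead $\boldsymbol{x}_{n^*}=1$ for the unique marked index $n^*$, then $H=I-2\ket{n^*}\bra{n^*}$, so $[H,X_1]=-2[\ket{n^*}\bra{n^*},X_1]$; since flipping the first bit sends $\ket{n^*}$ to an orthogonal computational basis state, $[\ket{n^*}\bra{n^*},X_1]=\ket{n^*}\bra{m}-\ket{m}\bra{n^*}$ with $\ket{m}=X_1\ket{n^*}$, which has operator norm $1$, and therefore $\|[H,X_1]\|=2\ge1$. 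Thus a procedure that distinguishes $[H,X_1]=0$ from $\|[H,X_1]\|\ge1$ with probability at least $2/3$ decides whether $\boldsymbol{x}=0$, i.e.\ it computes $\operatorname{OR}(\boldsymbol{x})$ on the domain of weight-$\le1$ strings with the same success probability while making $2Q$ queries to $U$. Invoking the $\Omega(\sqrt{2^N})=\Omega(2^{N/2})$ lower bound for OR by a bounded-error quantum algorithm~\cite{beals2001quantum}---which, as noted in the comment following their Theorem~4.13, already holds for this partial/promise version---forces $2Q=\Omega(2^{N/2})$, hence $Q=\Omega(2^{N/2})$.

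The step I expect to require the most care is the first one: checking that $e^{-iHt}$ can be simulated from the phase oracle with only a constant multiplicative query overhead and \emph{exactly} (so no error budget is consumed in the simulation), and being precise about which variant of the search lower bound is cited so that restricting inputs to Hamming weight at most one is legitimate. Everything else is a routine reduction.
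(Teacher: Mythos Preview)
Your proposal is correct and follows essentially the same route as the paper: set $H=U$ for the phase oracle encoding a Hamming-weight-$\le 1$ string, implement $e^{-iHt}$ with two queries via phase kickback (using $H^2=I$), compute $\|[H,X_1]\|=2$ in the marked case versus $[H,X_1]=0$ in the unmarked case, and invoke the partial-function OR lower bound from the comment after Theorem~4.13 of~\cite{beals2001quantum}. The only cosmetic difference is that you spell out the identity $e^{-iHt}=\cos(t)I-i\sin(t)H$ and the explicit form $\ket{n^*}\bra{m}-\ket{m}\bra{n^*}$ of the commutator, which the paper leaves implicit.
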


\section{Time-derivative bounds for local observable expectation values}
\label{sec:time_derivative_bounds}

We consider a general Hamiltonian of the form
\begin{equation}
\label{eq:general_ham}
    H = \sum_{P\in\{I,X,Y,Z\}^N}\lambda_{P}P,
\end{equation}
where $|\lambda_{P}|\leq 1$, and $P=\bigotimes_j P_j$ is a Pauli operator with components $P_j\in\{I,X,Y,Z\}$. We show that the local observable expectation values behave nicely as a function of time for a class of Hamiltonians. More precisely, consider a local observable $O$, which by definition is supported on $s=\Or(1)$ adjacent qubits, we want to bound the high-order derivatives of
\begin{equation}
    \braket{O(t)}=\Tr[e^{iHt}Oe^{-iHt}\rho].
\end{equation}
The Hamiltonians $H$ we consider need to satisfy the following assumption:
\begin{lem}
\label{lem:derivative_growth}
Let $H$ in \eqref{eq:general_ham} be a $k$-local Hamiltonian. We assume that for each qubit $j$, $\sum_{P:P_j\neq I}|\lambda_P|\leq \Lambda$. Then
\begin{equation}
    \left|\frac{\dd^\ell}{\dd t^\ell}\braket{O(t)}\right|= \Or(\ell !(2\Lambda (k-1))^\ell \|O\|).
\end{equation}
\end{lem}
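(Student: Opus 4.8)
The plan is to expand $\frac{\dd^\ell}{\dd t^\ell}\braket{O(t)}$ as an $\ell$-fold nested commutator and then count the number of surviving terms. In the Heisenberg picture $O(t) = e^{iHt}Oe^{-iHt}$, so $\frac{\dd^\ell}{\dd t^\ell}O(t) = i^\ell\,\mathrm{ad}_H^\ell(O)(t)$, where $\mathrm{ad}_H(A) = [H,A]$. Taking the expectation against $\rho$ and using $\|\rho\|_1 = 1$ together with $|\braket{A}| \le \|A\|$, it suffices to bound $\|\mathrm{ad}_H^\ell(O)\|$. So the core of the argument is the operator-norm bound $\|\mathrm{ad}_H^\ell(O)\| = \Or\big(\ell!\,(2\Lambda(k-1))^\ell \|O\|\big)$, which is a standard Lieb–Robinson-type combinatorial estimate.

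First I would set up the bookkeeping. Write $H = \sum_P \lambda_P P$ over $k$-local Pauli terms $P$, and expand $\mathrm{ad}_H^\ell(O) = \sum_{P^{(1)},\dots,P^{(\ell)}} \lambda_{P^{(1)}}\cdots\lambda_{P^{(\ell)}}\,[P^{(\ell)},[\cdots,[P^{(1)},O]\cdots]]$. A term is nonzero only if each successive $P^{(m)}$ fails to commute with the operator built from $O$ and $P^{(1)},\dots,P^{(m-1)}$; in particular $P^{(m)}$ must overlap on at least one qubit with the support of that intermediate operator. Since each Pauli product of norm-$1$ operators has norm $1$ and each commutator contributes a factor at most $2$, each surviving term has absolute contribution at most $2^\ell \prod_m |\lambda_{P^{(m)}}|$. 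The remaining task is to bound $\sum 2^\ell\prod_m|\lambda_{P^{(m)}}|$ over all admissible sequences.

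The key counting step: fix the intermediate operator at step $m$; its support has size at most $s + (m-1)(k-1)$, since $O$ is $s$-local and each prior Pauli term can enlarge the support by at most $k-1$ new qubits. For $P^{(m)}$ to not commute it must touch one of these $\le s + (m-1)(k-1)$ qubits, and by the assumption $\sum_{P:\,P_j\neq I}|\lambda_P| \le \Lambda$ the sum of $|\lambda_{P^{(m)}}|$ over all Paulis touching a fixed qubit is $\le \Lambda$. Summing over the $\le s + (m-1)(k-1)$ qubits gives a factor $\le \Lambda\big(s + (m-1)(k-1)\big)$ at step $m$ (there is a harmless overcounting here since a Pauli may touch several of these qubits, but this only helps the upper bound). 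Multiplying over $m = 1,\dots,\ell$ yields
\begin{equation}
\|\mathrm{ad}_H^\ell(O)\| \le \|O\|\,2^\ell\prod_{m=1}^{\ell}\Lambda\big(s+(m-1)(k-1)\big) \le \|O\|\,(2\Lambda(k-1))^\ell \prod_{m=1}^{\ell}\Big(\tfrac{s}{k-1}+m\Big),
\end{equation}
and bounding $\prod_{m=1}^\ell(\frac{s}{k-1}+m) = \Or(\ell!)$ for fixed $s,k=\Or(1)$ (absorbing the $s/(k-1)$ shift into the $\Or$-constant) gives the claimed $\Or(\ell!\,(2\Lambda(k-1))^\ell\|O\|)$.

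The main obstacle is making the support-growth and non-commutation counting rigorous without losing the $\ell!$ rather than $(\ell!)^2$ or worse: one must argue that it is legitimate to charge each new Pauli only against the \emph{current} support (which grows linearly in $m$) rather than against all of $O$'s light cone, and that the per-qubit weight bound $\Lambda$ composes multiplicatively across steps. A clean way to handle this is to track, inductively, a bound of the form $\sum |\text{coeff}| \cdot \mathbf{1}[\text{support} = S]$ summed appropriately, or simply to note that the admissibility constraint at step $m$ depends only on the support reached so far, whose size is deterministically bounded by $s+(m-1)(k-1)$; this decouples the steps and delivers the product bound. The rest — converting $\prod(\frac{s}{k-1}+m)$ to $\Or(\ell!)$ and reinstating the $i^\ell$ and $\rho$ — is routine.
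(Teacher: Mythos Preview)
Your proposal is correct and follows essentially the same route as the paper: reduce to bounding $\|\mathrm{ad}_H^\ell(O)\|$, expand into nested Pauli commutators, use that each $P^{(m)}$ must touch the current support (of size at most $s+(m-1)(k-1)$), apply the per-qubit weight bound $\Lambda$ at each step, and finish by estimating the resulting rising product by $\Or(\ell!)$. The only cosmetic difference is where the factor $2^\ell$ enters: you charge it to the commutator norm $\|[P,A]\|\le 2\|A\|$, whereas the paper extracts it from the binomial estimate on the rising product; the combinatorics and the final bound are the same.
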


\begin{proof}
    First, we observe that
    \begin{equation}
        \frac{\dd^\ell}{\dd t^\ell}\braket{O(t)} = \tr[e^{iHt}\operatorname{ad}_H^{\ell}(O)e^{-iHt}\rho],
    \end{equation}
where we recall the notation that $\operatorname{ad}_A(B):=[A,B]$.
Therefore it suffices to prove that
\begin{equation}
    \|\operatorname{ad}_H^{\ell}(O)\|= \Or(\ell !(2\Lambda (k-1))^\ell \|O\|).
\end{equation}
We first expand $\operatorname{ad}_H^{\ell}(O)$ using the expression for $H$ in \eqref{eq:general_ham}:
\begin{equation}
\label{eq:nested_commutator_expand}
    \operatorname{ad}_H^{\ell}(O) = \sum_{P^1,P^2,\cdots,P^\ell}\lambda_{P^1}\lambda_{P^2}\cdots \lambda_{P^\ell}[P^{\ell},\cdots[P^2,[P^1,O]]\cdots].
\end{equation}
Note that the nested commutator $[P^{\ell},\cdots[P^2,[P^1,O]]\cdots]\neq 0$ only if each $P^r$ overlaps with the nested commutator up to the $r-1$ level, $r=1,2,\cdots,\ell$. For a fixed sequence of $\{P^r\}$, we recursively define these nested commutators through
\begin{equation}
    O_0 = O,\quad O_{r} = [P^r,O_{r-1}],\ r=1,2\cdots,\ell.
\end{equation}
Then for each $P^r$, there must exist a qubit $q_r$, such that both $P^r$ and $O_{r-1}$ act non-trivially on $q_r$, in order for $[P^{\ell},\cdots[P^2,[P^1,O]]\cdots]\neq 0$. Consequently we have
\begin{equation}
\label{eq:huge_sum_paths}
    \begin{aligned}
        & \sum_{P^1,P^2,\cdots,P^\ell}|\lambda_{P^1}\lambda_{P^2}\cdots \lambda_{P^\ell}|\|[P^{\ell},\cdots[P^2,[P^1,O]]\cdots]\| \\
        &\leq \sum_{q_1\in\operatorname{supp(O)}}\sum_{P^1:P^1_{q_1}\neq I}\sum_{q_2\in\operatorname{supp(O_1)}}\sum_{P^2:P^2_{q_2}\neq I}\cdots \sum_{q_\ell\in\operatorname{supp(O_{\ell-1})}}\sum_{P^\ell:P^\ell_{q_\ell}\neq I}|\lambda_{P^1}\lambda_{P^2}\cdots \lambda_{P^\ell}|\|O\| \\
        &=\|O\|\sum_{q_1\in\operatorname{supp(O)}}\sum_{P^1:P^1_{q_1}\neq I}|\lambda_{P^1}|\sum_{q_2\in\operatorname{supp(O_1)}}\sum_{P^2:P^2_{q_2}\neq I}|\lambda_{P^2}|\cdots \sum_{q_\ell\in\operatorname{supp(O_{\ell-1})}}\sum_{P^\ell:P^\ell_{q_\ell}\neq I} |\lambda_{P^\ell}| \\
    \end{aligned}
\end{equation}
Note that $\sum_{P^\ell:P^\ell_{q_\ell}\neq I} |\lambda_{P^\ell}|\leq \Lambda$ by assumption. As a result
\begin{equation}
    \sum_{q_\ell\in\operatorname{supp(O_{\ell-1})}}\sum_{P^\ell:P^\ell_{q_\ell}\neq I} |\lambda_{P^\ell}|\leq |\operatorname{supp}(O_{\ell-1})|\Lambda\leq (s+(\ell-1)(k-1))\Lambda,
\end{equation}
where we have used the fact that $|\operatorname{supp}(O_{\ell-1})|\leq s+(\ell-1)(k-1)$, which can be proved by induction on $\ell$. Because of this, the right-hand side of \eqref{eq:huge_sum_paths} can be upper bounded by
\begin{equation}
    \begin{aligned}
        &(s+(\ell-1)(k-1))\Lambda \\
        &\times 
        \|O\|\sum_{q_1\in\operatorname{supp(O)}}\sum_{P^1:P^1_{q_1}\neq I}|\lambda_{P^1}|\sum_{q_2\in\operatorname{supp(O_1)}}\sum_{P^2:P^2_{q_2}\neq I}|\lambda_{P^2}|\cdots \sum_{q_{\ell-1}\in\operatorname{supp(O_{\ell-2})}}\sum_{P^\ell:P^{\ell-1}_{q_{\ell-1}}\neq I} |\lambda_{P^{\ell-1}}|.
    \end{aligned}
\end{equation}
One can keep doing this for $\ell$ times, and the right-hand side of \eqref{eq:huge_sum_paths} is bounded by
\begin{equation}
    \|O\|\Lambda^\ell s(s+k-1)\cdots(s+(\ell-1)(k-1)).
\end{equation}
Because of \eqref{eq:nested_commutator_expand}, this is an upper bound of $\|\operatorname{ad}_H^{\ell}(O)\|$. 

We only need to bound $s(s+k-1)\cdots(s+(\ell-1)(k-1))$.
We have
\begin{equation}
\begin{aligned}
    s(s+k-1)\cdots(s+(\ell-1)(k-1)) &= (k-1)^{\ell}\frac{s}{k-1}\left(\frac{s}{k-1}+1\right)\cdots \left(\frac{s}{k-1}+\ell-1\right) \\
    &\leq (k-1)^{\ell} \frac{\left(\lceil\frac{s}{k-1}\rceil + \ell-1 \right)!}{\lceil\frac{s}{k-1}-1\rceil!} \\
    &=(k-1)^{\ell} {\lceil\frac{s}{k-1}\rceil + \ell-1 \choose \ell}\ell ! \\
    &\leq (k-1)^{\ell} 2^{\lceil\frac{s}{k-1}\rceil + \ell-1} \ell ! \\
    &= \Or((2(k-1))^{\ell}\ell !).
\end{aligned}
\end{equation}
This completes the proof.
\end{proof}

Next, we show that geometrically local Hamiltonians and local Hamiltonians with power-law interaction that decays fast enough satisfy the assumptions in Lemma~\ref{lem:derivative_growth}. The key quantity of interest is $\sum_{P:P_j\neq I}|\lambda_P|$, which is the sum of the absolute value of all coefficients of Pauli terms that act non-trivially on a qubit $j$. For geometrically local Hamiltonians, there are only $\Or(1)$ terms acting on any given qubit, and consequently $\sum_{P:P_j\neq I}|\lambda_P|=\Or(1)$ if $|\lambda_P|\leq 1$ as assumed at the beginning of this section.

For power-law interaction Hamiltonians, we adopt a restricted definition to make the discussion easier, without neglecting any essential feature of these Hamiltonians. For these Hamiltonians, $\lambda_P\neq 0$ only when $P$ involves at most two qubits. Moreover, $|\lambda_P|=\Or(d^{-\alpha})$, where $d$ is the distance, on a $D$-dimensional lattice, between the two qubits, and $\alpha$ is the exponent deciding how rapid the decay is. The sum of all coefficients involving a qubit $j$ can be roughly bounded by
\begin{equation}
    \sum_{j'\in\ZZ^D} |j'|^{-\alpha},
\end{equation}
where $Z$ is the set of all integers, and $\ZZ^D$ is a $D$-dimensional lattice.
When $\alpha>D+1$, we have $\sum_{j'\in\ZZ^D} |j'|^{-\alpha}<\infty$, thus giving us a bound $\sum_{P:P_j\neq I}|\lambda_P|=\Or(1)$.

\end{document}